\newcommand{\mathsym}[1]{{}}
\let\oldbfseries=\bfseries
\let\oldmdseries=\mdseries
\let\oldnormalfont=\normalfont
\renewcommand{\bfseries}{\oldbfseries\boldmath}
\renewcommand{\mdseries}{\oldmdseries\unboldmath}
\renewcommand{\normalfont}{\oldnormalfont\unboldmath}
\numberwithin{equation}{section}
\renewcommand\subparagraph{\@startsection{subparagraph}{5}%
	{\parindent}
	{0pt}
	{-1em}
	{\normalfont\itshape}
} 
\newcommand\hypersetup[1]{}\fi
\DeclareMathSymbol{\Gamma}{\mathalpha}{letters}{"00}
\DeclareMathSymbol{\Delta}{\mathalpha}{letters}{"01}
\DeclareMathSymbol{\Theta}{\mathalpha}{letters}{"02}
\DeclareMathSymbol{\Lambda}{\mathalpha}{letters}{"03}
\DeclareMathSymbol{\Xi}{\mathalpha}{letters}{"04}
\DeclareMathSymbol{\Pi}{\mathalpha}{letters}{"05}
\DeclareMathSymbol{\Sigma}{\mathalpha}{letters}{"06}
\DeclareMathSymbol{\Upsilon}{\mathalpha}{letters}{"07}
\DeclareMathSymbol{\Phi}{\mathalpha}{letters}{"08}
\DeclareMathSymbol{\Psi}{\mathalpha}{letters}{"09}
\DeclareMathSymbol{\Omega}{\mathalpha}{letters}{"0A}
\DeclareMathOperator{\arctanh}{arctanh}
\newcommand{\gen}[1]{\mathrm{#1}}
\newcommand{\dd}{\mathrm{d}}
\newcommand{\ii}{\mathrm{i}}
\newcommand*\widebar[1]{%
  \hbox{%
    \vbox{%
      \hrule height 0.5pt 
      \kern0.25ex
      \hbox{%
        \kern-0.3em
        \ensuremath{#1}%
        \kern-0.1em
      }%
    }%
  }%
}
\newcommand{\alg}[1]{\mathfrak{#1}}
\newcommand{\beq}{\begin{equation}}
\newcommand{\eeq}{\end{equation}}
\def\[{\begin{equation}}
\def\]{\end{equation}}
\def\<{\begin{eqnarray}}
\def\>{\end{eqnarray}}
\newtheorem{mydef}{Definition}
\newtheorem{theorem}{Theorem}
\newtheorem{lemma}{Lemma} 
\newtheorem{remark}{Remark}
\newtheorem{proposition}{Proposition}
\newtheorem{corollary}{Corollary}
\newtheorem{example}{Example}
\def\mr@ignsp#1 {\ifx\:#1\@empty\else #1\expandafter\mr@ignsp\fi}%
\newcommand{\multiref}[1]{\begingroup
\xdef\mr@no@sparg{\expandafter\mr@ignsp#1 \: }%
\def\mr@comma{}%
\@for\mr@refs:=\mr@no@sparg\do{\mr@comma\def\mr@comma{,}\ref{\mr@refs}}%
\endgroup}
\newcommand{\hypref}[2]{\ifx\href\asklfhas #2\else\href{#1}{#2}\fi}
\newcommand{\Secref}[1]{Section~\multiref{#1}}
\renewcommand{\eqref}[1]{(\multiref{#1})}
\newlength{\apb@width}
\newcommand{\autoparbox}[2][c]{\settowidth{\apb@width}{#2}\parbox[#1]{\apb@width}{#2}}
\asklfhas\newcommand{\href}[2]{#2}\fi
\begin{document}

\renewcommand{\thefootnote}{\fnsymbol{footnote}}
\thispagestyle{empty}
\begin{flushright}\footnotesize
ZMP-HH/15-14 \\
ITP-UU-15/06
\end{flushright}
\vspace{0.2cm}

\begin{center}%
{\Large\bfseries%
\hypersetup{pdftitle={Differential approach to on-shell scalar products in six-vertex models}}%
Differential approach  \\ to on-shell scalar products \\ in six-vertex models%
\par} \vspace{2cm}%

\textsc{W.~Galleas$^a$ and J.~Lamers$^b$}\vspace{5mm}%
\hypersetup{pdfauthor={Wellington Galleas}}%

$^a$ \textit{II. Institut f\"ur Theoretische Physik \\ Universit\"at Hamburg, Luruper Chaussee 149 \\ 22761 Hamburg, Germany}\vspace{3mm}%

\verb+wellington.galleas@desy.de+ %

\vspace{0.5cm}

$^b$ \textit{Institute for Theoretical Physics, \\ Center for Extreme Matter and Emergent Phenomena, Utrecht University  \\ Leuvenlaan 4, 3584 CE Utrecht, the Netherlands}\vspace{3mm}%

\verb+j.lamers@uu.nl+

\par\vspace{2cm}

\textbf{Abstract}\vspace{7mm}

\begin{minipage}{12.7cm}
In this work we obtain hierarchies of partial differential equations describing on-shell scalar products for two types of 
six-vertex models. More precisely, six-vertex models with two different diagonal boundary conditions are considered: the case with
boundary twists and the case with open boundary conditions. Solutions and properties of our partial differential equations 
are also discussed.

\hypersetup{pdfkeywords={Six-vertex model, scalar product, partial differential equations}}%
\hypersetup{pdfsubject={}}%

\end{minipage}
\vskip 1.5cm
{\small PACS numbers:  05.50+q, 02.30.IK}
\vskip 0.1cm
{\small Keywords: Six-vertex model, scalar product, partial differential equations}
\vskip 1cm
{\small May 2015}

\end{center}

\newpage
\renewcommand{\thefootnote}{\arabic{footnote}}
\setcounter{footnote}{0}

\tableofcontents

\section{Introduction}
\label{sec:intro}

The behavior of physical systems at criticality is endowed with remarkable properties and its description is intimately associated
with the study of correlation functions. When a system is away from a critical point, two-point correlation functions are expected
to decay exponentially as the distance between the target points becomes infinitely large. 
This asymptotic behavior motivates the definition of the \textit{correlation length} which contains fundamental information concerning
the system critical behavior. For instance, correlation lengths are expected to diverge in second-order phase transitions when a system 
reaches its critical point. In fact, this singular behavior can be regarded as a trademark of a second-order phase transition and this divergence
is then governed by a power law characterized by a \textit{critical exponent}. 

The scenario at criticality is drastically different and correlation functions are expected to decay according to a power law instead of
exponentially. This feature motivates one to define the \textit{anomalous dimension} for the corresponding order parameter.
The anomalous dimension is also a critical exponent and this particular power-law behavior shows that statistical fluctuations of the order
parameter are strongly correlated throughout the entire extension of the physical system.
In the vicinity of a phase transition  correlation lengths are much larger than the system's lattice spacing and 
it can be regarded as a natural distance scale. Furthermore, the correlation length is a function of the model's coupling constants
and the configurations of the system over the correlation length can be made sufficiently smooth such that the continuum formalism of 
quantum field theory is appropriate. For a detailed discussion we refer the reader to \cite{Baxter_book, Ma_book, Justin_book}. 

The computation of correlation functions of interacting systems is a highly nontrivial task  and it is often performed by means of
perturbative expansions. However, since critical behavior is associated with divergences of the aforementioned quantities,
it is highly desirable having a non-perturbative description of correlation functions. In this way, exactly solvable models of statistical
mechanics and quantum field theory are natural candidates for the investigation of critical phenomena.

In the context of quantum field theory correlation functions  are usually described by means of \textit{differential equations}.
We have for instance the Callan-Symanzik equation arising from the renormalization group framework \cite{Callan_1970, Symanzik_1970, Symanzik_1971} and the 
Knizhnik-Zamolodchikov (KZ) equation describing correlation functions of primary fields in conformal field theories \cite{Knizhnik_1984}. 
Differential equations also seem to play a prominent role in the study of correlation functions of exactly solvable models of statistical
mechanics, although results in this direction are still limited to a few models. For instance, in \cite{Its_1990} the authors have 
shown that a two-point function of the impenetrable Bose gas can be described by a differential equation of Painlev\'e type. 
Prior to that, Painlev\'e equations were also found in the study of certain two-point functions of the two-dimensional Ising 
model \cite{Barouch_1973, Tracy_1973, Wu_1976, Jimbo_1980}.

Within the framework of the Quantum Inverse Scattering Method (QISM) \cite{Sk_Faddeev_1979, Takh_Faddeev_1979}, the study of form factors and
correlation functions is intimately related to the evaluation of scalar products of Bethe vectors \cite{Korepin_1982}. In particular, as far 
as the six-vertex model and the associated Heisenberg spin chain are concerned, the form factor of the particle-number operator has been written
in terms of on-shell scalar products of Bethe vectors in \cite{Slavnov_1989}. It is worth remarking that the relation between form factors and
on-shell scalar products obtained in \cite{Slavnov_1989} builds on a detailed analysis of form factors previously investigated in 
\cite{Izergin_1984, Izergin_1985}. 

The above discussion raises some questions concerning the description of correlation functions through differential equations. For instance, 
there is a fundamental difference between the differential equations satisfied by correlation functions in conformal field theories 
and those obtained for the Bose gas and the Ising model. The latter are non-linear equations of Painlev\'e type whereas KZ 
equations are linear. Hence one may wonder if there exists \textit{linear} partial differential equations (PDEs) describing six-vertex model correlation functions.
This question is the main motivation for the present paper. Our arguments here shall rely on the results presented in \cite{Slavnov_1989}
expressing form factors of the six-vertex model in terms of on-shell scalar products of Bethe vectors. In this way we achieve a linear
differential description of six-vertex model form factors by presenting linear PDEs determining on-shell scalar products.   

The possibility of deriving linear PDEs for such quantities was first put forward in \cite{Galleas_2011} based on a detailed analysis of certain functional equations 
originated from the Yang-Baxter algebra. Several quantities and lattice systems have been tackled through this \textit{algebraic-functional method} and we refer the reader
to \cite{Galleas_2008, Galleas_2010, Galleas_2011, Galleas_2012, Galleas_SCP, Galleas_proc, Galleas_Lamers_2014, Galleas_openSCP} for an account of the results
so far. In particular, linear PDEs describing partition functions with domain-wall boundaries and spectral problems for transfer matrices have been 
presented in \cite{Galleas_2011, Galleas_proc, Galleas_Lamers_2014} and \cite{Galleas_2015} respectively. 
Although the Yang-Baxter algebra is a fundamental ingredient, this approach is not limited to this particular algebraic structure. For example, integrable systems with open boundary
conditions can also be included in this framework by replacing the Yang-Baxter algebra by the \textit{reflection algebra} \cite{Galleas_Lamers_2014, Galleas_openSCP}.

\paragraph{Outline.} In the present work we construct a differential description of six-vertex models on-shell scalar products for two types of boundary conditions.
This paper is organized as follows. \Secref{sec:twist} is concerned with the case of diagonal boundary conditons, also referred to
as quasi-periodic boundaries. Our analysis in this case will rely heavily on results previously presented in
\cite{Galleas_SCP} which includes a functional equation for the relevant scalar product. These results are recalled in \Secref{sec:funEQ} and the corresponding PDEs are
obtained and studied in \Secref{sec:PDE}. On-shell scalar products for diagonal open boundary conditions are discussed in \Secref{sec:open}. Here we make use of the results recently
obtained in \cite{Galleas_openSCP} whose details are given in \Secref{sec:funEQ_open}. The resulting PDEs are described in \Secref{sec:PDE_open} and we conclude with final 
remarks in \Secref{sec:CONCLUSION}.

\section{Twisted boundary conditions}
\label{sec:twist}

The study of integrability-preserving boundary conditions is an important branch of the theory of exactly solvable models \cite{Baxter_book}.
For instance, it is well known that particular boundary terms can be used to relate the critical behavior of several models of statistical mechanics 
\cite{Baxter_book, Alcaraz_Barber_1987, Alcaraz_1987}. Relations between different systems away from the criticality are more subtle but nevertheless such relations 
still exist. This is the case for the six-vertex models with domain-wall and anti-periodic boundary conditions \cite{Galleas_Twists}, whose relation has been established
through the same algebraic-functional approach employed in \cite{Galleas_SCP}. As far as the six-vertex model is concerned, boundary conditions are not a mere detail that 
can be disregarded in the thermodynamical limit. In that case boundary conditions even influence the model bulk free-energy 
\cite{Korepin_Justin_2000, Korepin_Ribeiro_2015, Bleher_2006, Bleher_2009, Bleher_2010} and modify the spectrum  of finite-size corrections \cite{Cardy_1986, Alcaraz_1987}. The latter provides fundamental data concerning the underlying conformal field theory. 
We also point out that twisted boundary conditions arise naturally in the context of Bethe/gauge correspondence by including certain topological terms
in the lagrangian of the two-dimensional $\mathcal{N} = (2,2)$ gauge theory \cite{Nekrasov_2009}.

As far as vertex models with twisted boundary conditions are concerned, the algebraic formulation of the integrability-preserving case
was put forward in \cite{deVega_1984} based on the invariance of the Yang-Baxter algebra under a particular linear map. This algebraic 
setting was then used in \cite{Galleas_SCP} to construct a functional equation governing on-shell scalar products. We shall build our analysis
on those results in order to derive a hierarchy of PDEs describing such scalar products.

\subsection{Functional equation}
\label{sec:funEQ}

The starting point of our analysis is the study of six-vertex model scalar products by means of functional equations as described in \cite{Galleas_SCP}.
Throughout this work we shall use the terminology \textit{on-shell scalar product} to refer to a scalar product of Bethe vectors 
when only \textit{one} of the vectors is subjected to Bethe ansatz equations. We consider these on-shell Bethe roots as fixed complex
parameters, so that the on-shell scalar products can be regarded as a function $\mathcal{S}_n \colon \mathbb{C}^n \to \mathbb{C}$
where $n \in \mathbb{Z}_{\geq 0}$ is the number of magnons contained in the Bethe vectors. In its turn, the integer index $n$ is bounded by the size
of the rectangular lattice on which our six-vertex model is defined. In fact, if $L \in \mathbb{Z}_{> 0}$ is the lattice horizontal length, then $0 \leq n \leq L$.
The on-shell scalar product $\mathcal{S}_n$ is a multivariate function in $n$ complex variables referred to as spectral parameters. Before proceeding let us pause for a moment
to set up our notation. 

\begin{mydef}[Indices, variables and parameters] \label{defvar}
Let $n \in \mathbb{Z}_{\geq 0}$ satisfy the condition $0 \leq n \leq L$ for $L \in \mathbb{Z}_{> 0}$. Then consider indices $k \in \mathbb{Z}$ and
let $\lambda_k, \lambda_k^{B} \in \mathbb{C}$ denote spectral variables. In addition to that, we shall use $\gamma \in \mathbb{C}$ to denote the anisotropy 
parameter, $\mu_k \in \mathbb{C}$ will be inhomogeneity parameters and $\phi_1, \phi_2 \in \mathbb{C} \backslash\{0 \}$ will characterize the boundary twists. 
For convenience, we shall also employ the short-hand notation: $a(\lambda) \coloneqq \sinh(\lambda + \gamma)$, $b(\lambda) \coloneqq \sinh(\lambda)$
and $c(\lambda) \coloneqq \sinh(\gamma)$.
\end{mydef}  

\begin{mydef}[Spectral parameter and inhomogeneity sets] \label{Uset}
Define $\gen{\Lambda}^{i,j} \coloneqq \{ \lambda_k \in \mathbb{C} \; | \; i \leq k \leq j \}$ and $\gen{\Lambda}^{i,j}_{\lambda} \coloneqq \gen{\Lambda}^{i,j} \backslash \{ \lambda \}$. 
It is also convenient to define the inhomogeneity set $\mathcal{U} \coloneqq \{ \mu_k \in \mathbb{C} \; | \; 1 \leq k \leq L \}$ in order to recast the dependence of the six-vertex model scalar products with the inhomogeneity parameters.
\end{mydef}

\begin{mydef}[Bethe roots]  \label{bethe}
Let $\lambda_k^{B} \in \mathbb{C}$ for $1 \leq k \leq n$. We then define the set $\mathcal{B} \coloneqq \left\{ \lambda_1^{B}, \lambda_2^{B}, \dots , \lambda_n^{B} \right\}$
of spectral variables solving
\<
\label{setB}
\prod_{\mu \in \mathcal{U}} \frac{a(\lambda_k^B - \mu)}{b(\lambda_k^B - \mu)} = (-1)^{n-1} \frac{\phi_2}{\phi_1}  \prod_{\substack{i=1 \\ i \neq k}}^n \frac{a(\lambda_k^B - \lambda_i^B)}{a(\lambda_i^B - \lambda_k^B)}  \; .  
\>
The set $\mathcal{B}$ will also be referred to as Bethe roots set.
\end{mydef}

\begin{remark}
The constraint imposed on the variables $\lambda_k^{B}$, i.e. Bethe ansatz equations, admits several inequivalent sets of Bethe
roots. Througout this work we let $\mathcal{B}$ be any single fixed set of such Bethe roots. 
\end{remark}

Among the main results of \cite{Galleas_SCP} we have the following theorem.

\begin{theorem}[Functional equation] \label{funEQ}
Taking into account the above definitions, the on-shell scalar product $\mathcal{S}_n$ satisfies the functional equation
\[
\label{FS}
K_0 \; \mathcal{S}_n (\gen{\Lambda}^{1,n}) + \sum_{\lambda \in \gen{\Lambda}^{1,n}} K_{\lambda} \; \mathcal{S}_n (\gen{\Lambda}_{\lambda}^{0,n}) = 0 \; ,
\]
with coefficients $K_0$ and $K_{\lambda}$ given by
\<
\label{coeff}
K_0 &\coloneqq& \phi_1 \prod_{\mu \in \mathcal{U}} a(\lambda_0 - \mu) \left[ \prod_{\lambda \in \gen{\Lambda}^{1,n}} \frac{a(\lambda - \lambda_0)}{b(\lambda - \lambda_0)} -  \prod_{\lambda \in \mathcal{B}} \frac{a(\lambda - \lambda_0)}{b(\lambda - \lambda_0)}  \right] \nonumber \\
&& + \; \phi_2 \prod_{\mu \in \mathcal{U}} b(\lambda_0 - \mu) \left[ \prod_{\lambda \in \gen{\Lambda}^{1,n}} \frac{a(\lambda_0 - \lambda)}{b(\lambda_0 - \lambda)} -  \prod_{\lambda \in \mathcal{B}} \frac{a(\lambda_0 - \lambda)}{b(\lambda_0 - \lambda)}  \right] \nonumber \\
K_{\lambda} &\coloneqq& \phi_1 \frac{c(\lambda_0 - \lambda)}{b(\lambda_0 - \lambda)} \prod_{\mu \in \mathcal{U}} a(\lambda - \mu) \prod_{\tilde{\lambda} \in \gen{\Lambda}^{1,n}_{\lambda}} \frac{a(\tilde{\lambda} - \lambda)}{b(\tilde{\lambda} - \lambda)} \nonumber \\
&& + \; \phi_2 \frac{c(\lambda - \lambda_0)}{b(\lambda - \lambda_0)} \prod_{\mu \in \mathcal{U}} b(\lambda - \mu) \prod_{\tilde{\lambda} \in \gen{\Lambda}^{1,n}_{\lambda}} \frac{a(\lambda - \tilde{\lambda})}{b(\lambda - \tilde{\lambda})} \; .
\> 
\end{theorem}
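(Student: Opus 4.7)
The plan is to realize $\mathcal{S}_n$ as a matrix element built out of monodromy-matrix entries and then derive \eqref{FS} by computing one auxiliary matrix element in two different ways. Concretely, in the twisted algebraic Bethe ansatz framework \cite{deVega_1984}, introduce a twisted transfer matrix $T(\lambda) \coloneqq \phi_1\, A(\lambda) + \phi_2\, D(\lambda)$, where $A,B,C,D$ are the entries of the monodromy matrix acting on the inhomogeneous quantum space with inhomogeneities $\mathcal{U}$. Write
\[
\mathcal{S}_n(\gen{\Lambda}^{1,n}) \; = \; \frac{1}{\mathcal{N}_n}\; \bigl\langle 0 \bigr|\, \prod_{\lambda^B \in \mathcal{B}} C(\lambda^B) \prod_{\lambda \in \gen{\Lambda}^{1,n}} B(\lambda)\, \bigl| 0 \bigr\rangle ,
\]
where $\mathcal{N}_n$ is the standard normalization and $|0\rangle$ the ferromagnetic reference state. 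The on-shell condition on $\mathcal{B}$ in Definition~\ref{bethe} is precisely the statement that $\bigl\langle 0 \bigr|\prod_{\lambda^B \in \mathcal{B}} C(\lambda^B)$ is a left eigenvector of $T(\lambda_0)$ with a known eigenvalue $\Lambda(\lambda_0)$.

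Now introduce an auxiliary spectral parameter $\lambda_0$ and consider
\[
\mathcal{M}(\lambda_0) \; \coloneqq \; \bigl\langle 0 \bigr|\, \prod_{\lambda^B \in \mathcal{B}} C(\lambda^B)\, T(\lambda_0) \prod_{\lambda \in \gen{\Lambda}^{1,n}} B(\lambda)\, \bigl| 0 \bigr\rangle .
\]
First evaluate $\mathcal{M}(\lambda_0)$ by acting with $T(\lambda_0)$ to the left, using the left-eigenvalue relation. This produces the diagonal contribution $\Lambda(\lambda_0)\,\mathcal{N}_n\,\mathcal{S}_n(\gen{\Lambda}^{1,n})$, in which the two $\phi_{1,2}$-terms of the eigenvalue will eventually reassemble the two ``bracketed'' pieces of $K_0$ in \eqref{coeff}. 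Second, evaluate $\mathcal{M}(\lambda_0)$ by acting with $T(\lambda_0)$ to the right: repeatedly applying the $AB$- and $DB$-commutation relations of the Yang-Baxter algebra, one splits the result into a ``wanted'' piece in which the arguments of $B$ remain $\gen{\Lambda}^{1,n}$ and the diagonal operators finally hit $|0\rangle$ (producing the complementary $K_0$-terms with the $a(\lambda_0-\mu)$, $b(\lambda_0-\mu)$ factors), together with ``unwanted'' pieces arising from each permutation $\lambda_0 \leftrightarrow \lambda$ with $\lambda\in\gen{\Lambda}^{1,n}$. Each unwanted term is proportional to an off-shell scalar product with argument set $\gen{\Lambda}^{0,n}_{\lambda}$, and its prefactor is precisely the combination $K_\lambda$ recorded in \eqref{coeff}, where the ratio $c/b$ comes from the ``hopping'' coefficient in the commutation relation and the product over $\tilde\lambda$ comes from moving the diagonal operator through the remaining $B$'s.

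Equating the two evaluations of $\mathcal{M}(\lambda_0)$ and dividing by $\mathcal{N}_n$ then yields \eqref{FS} up to an identification of terms. The main technical obstacle is bookkeeping: one must track how the unwanted terms recombine when one reorganises the eigenvalue $\Lambda(\lambda_0)$ as the difference of the $\prod a/b$ products over $\gen{\Lambda}^{1,n}$ and over $\mathcal{B}$ respectively, which is where Definition~\ref{bethe} is used in a crucial cancellation. Once this cancellation is carried out the coefficients of $\mathcal{S}_n(\gen{\Lambda}^{1,n})$ and of each $\mathcal{S}_n(\gen{\Lambda}^{0,n}_{\lambda})$ take exactly the asserted form, completing the proof; the detailed algebraic manipulations are the ones performed in \cite{Galleas_SCP}.
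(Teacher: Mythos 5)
Your argument is correct and would establish the theorem, but it is organised differently from the route the paper actually takes (the paper itself omits the proof and defers to \cite{Galleas_SCP}). Your derivation is the direct, classical algebraic-Bethe-ansatz one: sandwich the twisted transfer matrix $T(\lambda_0)=\phi_1 A(\lambda_0)+\phi_2 D(\lambda_0)$ between the on-shell dual state and the off-shell state, use the left-eigenvalue property of $\bra{0}\prod_{\lambda^B\in\mathcal{B}}C(\lambda^B)$ (which is exactly where Definition~\ref{bethe} enters) on one side, and the $AB$/$DB$ exchange relations on the other; the difference of the ``wanted'' coefficient and the eigenvalue reproduces $K_0$, and the unwanted terms reproduce $K_\lambda$ as in \eqref{coeff}. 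The paper instead inherits \eqref{FS} from the algebraic-functional method of \cite{Galleas_SCP}: there one first derives \emph{two} functional equations for the fully off-shell scalar product $\mathcal{S}^{\text{off shell}}_n(\lambda_1,\dots,\lambda_n\,|\,\lambda_1^B,\dots,\lambda_n^B)$ by commuting a single monodromy entry ($A$ or $D$) through \emph{both} strings of operators — so that unwanted terms shift the $\lambda^B$'s as well as the $\lambda$'s — and only afterwards takes the particular $\phi_1,\phi_2$-weighted linear combination and restricts to $\lambda_k^B\in\mathcal{B}$, whereupon the left-hand unwanted terms cancel. Your version is more economical for the on-shell statement itself, since the eigenvector property disposes of the left-hand unwanted terms in one stroke; what it does not deliver is the pair of off-shell equations that \cite{Galleas_SCP} uses as its primary objects and that the paper alludes to in the remarks following the theorem. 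One small point to keep in mind when writing this up in full: the eigenvector property holds only for the combination $\phi_1 A+\phi_2 D$, not for $A$ and $D$ separately, so the two $\phi$-terms cannot be treated independently on the left — your plan respects this, but it is the one place where the bookkeeping genuinely requires the Bethe equations in the form \eqref{setB}.
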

\noindent We refer the reader to \cite{Galleas_SCP} for a detailed proof.

Some comments on Theorem \ref{funEQ} are important at this point. For instance, Eq. (\ref{FS}) is obtained as a particular linear combination
of two equations describing more general six-vertex model scalar products, i.e. off-shell scalar products. In that case the scalar product  
depends on two sets of variables, i.e. $\mathcal{S}^{\text{off shell}}_n = \mathcal{S}^{\text{off shell}}_n (\lambda_1 , \dots , \lambda_n | \lambda_1^B , \dots , \lambda_n^B )$,
and the on-shell case described by (\ref{FS}) results from the restriction to fixed $\lambda_k^B \in \mathcal{B}$. 

Interestingly, Eq. (\ref{FS}) exhibits the same linear structure as the functional equations satisfied by partition functions of six-vertex
models with domain-wall boundaries \cite{Galleas_2013, Galleas_proc, Galleas_Lamers_2014}. The explicit form of the coefficients (\ref{coeff})
are the main difference between (\ref{FS}) and the equations derived in \cite{Galleas_2013, Galleas_proc, Galleas_Lamers_2014}.
In this way one can expect that some features discussed in \cite{Galleas_proc} will also hold for (\ref{FS}). 
Before investigating this possibility we present some further properties of on-shell scalar products that will be relevant.

\subsubsection{Operatorial formulation}
\label{sec:OPD}

While on-shell scalar products are functions in $n$ variables, we can readily see that Eq. (\ref{FS})
runs over the set $\gen{\Lambda}^{0,n}$ containing $n+1$ variables. Thus we have one extra variable, namely $\lambda_0$,
which will play a fundamental role for extracting a hierarchy of PDEs from (\ref{FS}). In fact, this differentiated role played 
by $\lambda_0$ can be exploited to achieve an operatorial formulation of (\ref{FS}) with the help of the following definition.

\begin{mydef} \label{dia}
Write $\mathbb{C}[\lambda_1^{\pm 1} , \lambda_2^{\pm 1} , \dots , \lambda_n^{\pm 1}]$ for the set of meromorphic functions in $n$ variables.
For each $\lambda \in \gen{\Lambda}^{1,n}$ and any $\bar{\lambda} \notin \gen{\Lambda}^{1,n}$ define the operator
\[
D_{\lambda}^{\bar{\lambda}} \colon \mathbb{C}[\lambda_1^{\pm 1} , \dots , \lambda^{\pm 1} , \dots , \lambda_n^{\pm 1}] \to \mathbb{C}[\lambda_1^{\pm 1} , \dots , \bar{\lambda}^{\pm 1} , \dots , \lambda_n^{\pm 1}] 
\]
acting on $f \in \mathbb{C}[\lambda_1^{\pm 1} , \lambda_2^{\pm 1} , \dots , \lambda_n^{\pm 1}]$ as
\[
\label{DIA}
\left( D_{\lambda}^{\bar{\lambda}} \; f \right) (\lambda_1, \dots , \lambda, \dots , \lambda_n) \coloneqq f(\lambda_1, \dots , \bar{\lambda}, \dots , \lambda_n) \; .
\]
\end{mydef}

This operator was firstly introduced in \cite{Galleas_2011} and it fits naturally in the structure of (\ref{FS}).
Using this operator we can rewrite our functional equation as $\mathfrak{L}(\lambda_0) \; \mathcal{S}_n (\gen{\Lambda}^{1,n}) = 0$ where 
$\mathfrak{L}$ is an operator given in terms of $D_{\lambda}^{\lambda_0}$. More precisely, this operator reads
\[
\label{Lop}
\mathfrak{L}(\lambda_0) \coloneqq K_0 + \sum_{\lambda \in \gen{\Lambda}^{1,n}} K_{\lambda} \; D_{\lambda}^{\lambda_0} \; .
\]

The importance of this reformulation goes beyond just an optimized notation. For instance, this operatorial prescription localizes the whole
dependence of (\ref{FS}) on $\lambda_0$ into the operator $\mathfrak{L}$. This feature does not depend on particular realizations of the 
operator $D_{\lambda}^{\lambda_0}$, which can suitably chosen according to the function space where the desired scalar product sits.
Interestingly, the operator $D_{\lambda}^{\lambda_0}$ admits a differential realization with only finitely many derivatives in the ring
of multivariate polynomials \cite{Galleas_2011}. This feature will be a key ingredient for our differential description of on-shell scalar products to be discussed in 
\Secref{sec:PDE} and \Secref{sec:PDE_open}.

\subsubsection{Properties of $\mathcal{S}_n$}
\label{sec:properties}

Theorem \ref{funEQ} has its roots in the Yang-Baxter algebra and carries very little information on the particular
representation of Bethe vectors we employ to define the scalar product $\mathcal{S}_n$. Therefore, one can expect
that there will exist different classes of solutions corresponding to different representations of Bethe vectors.
Linearity is one of the main properties of (\ref{FS}) and due to that combinations of solutions are also solutions. 
Thus we need to provide conditions capable of singling out the solution corresponding to the desired scalar product. This issue is possibly
related to the fact that Theorem \ref{funEQ} is not restricted to a particular representation of Bethe vectors entering the definition
of the scalar product $\mathcal{S}_n$. In addition linearity also implies that Eq. (\ref{FS}) is at most able to determine 
the desired scalar product up to an overall multiplicative factor independent of the variables in the set $\gen{\Lambda}^{1,n}$.

Here we are interested in on-shell scalar products of the $\mathcal{U}_q [\widehat{\alg{sl}}(2)]$ six-vertex model with twisted boundary
conditions \cite{deVega_1984}, and this choice fixes the representation of Bethe vectors entering the construction of $\mathcal{S}_n$.
The scalar product $\mathcal{S}_n$ for this particular model exhibits two important properties that will be required in our analysis.
These properties are formulated in the following lemmas.

\begin{lemma}[Symmetric function] \label{symmetric}
For $1 \leq i,j \leq n$ let $s_{ij}$ be the permutation operator acting on $f \in \mathbb{C}[\lambda_1^{\pm 1} , \lambda_2^{\pm 1} , \dots , \lambda_n^{\pm 1}]$ 
by 
\[
\left( s_{ij} f \right) (\dots, \lambda_i , \dots , \lambda_j , \dots ) \coloneqq f(\dots, \lambda_j , \dots , \lambda_i , \dots ) \; .
\]
Then $ s_{i,j} \mathcal{S}_n = \mathcal{S}_n$.
\end{lemma}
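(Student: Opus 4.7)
The plan is to deduce the symmetry directly from the algebraic structure of the Bethe vectors rather than from the functional equation \eqref{FS}. In the QISM construction for the $\mathcal{U}_q[\widehat{\alg{sl}}(2)]$ six-vertex model with twisted boundaries \cite{deVega_1984}, the off-shell Bethe vector takes the standard form $|\Psi(\lambda_1,\ldots,\lambda_n)\rangle = B(\lambda_1)\,B(\lambda_2)\cdots B(\lambda_n)\,|0\rangle$, where $B(\lambda)$ is the upper-right entry of the (twisted) monodromy matrix and $|0\rangle$ is the pseudovacuum. The on-shell scalar product $\mathcal{S}_n$ is then the overlap of $|\Psi(\lambda_1,\ldots,\lambda_n)\rangle$ with the dual Bethe vector built from the fixed Bethe roots $\lambda_k^B \in \mathcal{B}$.

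First, I would invoke the commutation relation $[\,B(\lambda),B(\mu)\,]=0$, which is one component of the RTT relation satisfied by the six-vertex $R$-matrix. The diagonal twist characterized by $\phi_1,\phi_2$ only rescales $B(\lambda)$ by an overall constant, so it leaves this commutator intact; equivalently, the linear map used in \cite{deVega_1984} to introduce boundary twists preserves the Yang-Baxter algebra as a whole.

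With commutativity in hand, the symmetry $s_{ij}\,\mathcal{S}_n = \mathcal{S}_n$ follows immediately: swapping the entries $\lambda_i$ and $\lambda_j$ amounts to interchanging the positions of $B(\lambda_i)$ and $B(\lambda_j)$ inside the product defining $|\Psi(\lambda_1,\ldots,\lambda_n)\rangle$, which leaves the vector unchanged, and contracting with a dual vector that depends only on the fixed set $\mathcal{B}$ preserves this invariance. The main obstacle in this argument is essentially absent; once $\mathcal{S}_n$ is identified with a vacuum expectation value of an ordered product of $B$-operators, the statement reduces to a single line of the Yang-Baxter algebra. One could alternatively attempt to read the symmetry off of \eqref{FS}, but since \eqref{FS} determines $\mathcal{S}_n$ only up to an overall multiplicative constant and admits several inequivalent solution classes (as noted before the lemma), this indirect route cannot single out the symmetric solution without an extra input equivalent to the one used above.
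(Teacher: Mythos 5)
Your argument is correct and is essentially the standard one that the paper itself relies on: the paper does not prove Lemma~\ref{symmetric} in-house but defers to \cite{Galleas_SCP}, where the symmetry is likewise traced back to the commutativity $[B(\lambda),B(\mu)]=0$ of the creation operators in the (twist-preserved) Yang--Baxter algebra, the off-shell vector carrying the free variables being a symmetric product of such operators. One caveat: your closing claim that the symmetry cannot be extracted from \eqref{FS} without extra input contradicts the paper's own Remark following the lemma, which asserts that \emph{any} analytic solution of \eqref{FS} is automatically symmetric; this aside is inessential to your proof but should be dropped or corrected.
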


\begin{remark}
We remark that the symmetry property described in Lemma \ref{symmetric} has been already taken into account when writing the 
functional relation (\ref{FS}). It can be shown that, reversely, any analytic solution of (\ref{FS}) has this symmetry property.
\end{remark}

\begin{mydef} \label{Xset}
Let us introduce variables $x_i \coloneqq e^{2 \lambda_i}$ and set $\gen{X}^{i,j} \coloneqq \{ x_k \in \mathbb{C} \; | \; i \leq k \leq j \}$
and $\gen{X}^{i,j}_{x} \coloneqq \gen{X}^{i,j} \backslash \{ x \}$.
\end{mydef}

\begin{lemma}[Polynomial structure] \label{polynomial}
The function $\mathcal{S}_n$ is of the form
\[
\mathcal{S}_n ( \gen{\Lambda}^{1,n} ) = \frac{\bar{\mathcal{S}}_n ( \gen{X}^{1,n} )}{\displaystyle \prod_{x \in \gen{X}^{1,n}} x^{\frac{(L-1)}{2}}} \; ,
\]
where $\bar{\mathcal{S}_n}$ is a polynomial of degree $L-1$ in each of its variables separately.
\end{lemma}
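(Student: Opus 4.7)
The plan is to read the polynomial structure directly off the algebraic construction of $\mathcal{S}_n$ as an operator matrix element, reducing the claim to the analogous statement for a single off-diagonal entry $B(\lambda)$ of the monodromy matrix. Concretely, the Bethe vectors entering the scalar product are built from the entries of $T(\lambda) = R_{0,L}(\lambda - \mu_L)\cdots R_{0,1}(\lambda - \mu_1)$, so that
\[
\mathcal{S}_n(\gen{\Lambda}^{1,n}) = \bra{0} \prod_{k=1}^{n} C(\lambda_k^B) \prod_{i=1}^{n} B(\lambda_i) \ket{0}.
\]
For the twisted $\mathcal{U}_q[\widehat{\alg{sl}}(2)]$ model of \cite{deVega_1984}, the twist acts diagonally in auxiliary space, so it enters $\mathcal{S}_n$ only through the Bethe ansatz equations \eqref{setB} and at most as an overall rescaling of the $B$'s and $C$'s. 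Since the Bethe roots $\lambda_k^B$ are treated as fixed complex parameters, all $\lambda_i$-dependence is carried by the operators $B(\lambda_i)$.

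The key step is the single-variable statement that, in the variable $x = e^{2\lambda}$,
\[
B(\lambda) = \frac{\widehat{B}(x)}{x^{(L-1)/2}},
\]
with $\widehat{B}(x)$ an operator-valued polynomial of degree at most $L-1$. Expanding $T(\lambda)$ in auxiliary space, the $(1,2)$-entry $B(\lambda)$ is the sum over local configurations with exactly one auxiliary spin-flip along the chain: each such monomial contains one factor $c(\lambda - \mu_k) = \sinh\gamma$, which carries no $\lambda$-dependence in our normalisation, and $L-1$ factors of the form $a(\lambda - \mu_j)$ or $b(\lambda - \mu_j)$. Rewriting these as
\[
\sinh(\lambda - \mu_j) = \frac{x - e^{2\mu_j}}{2\,e^{\mu_j}\,x^{1/2}}, \qquad \sinh(\lambda - \mu_j + \gamma) = \frac{e^{\gamma - \mu_j}\,x - e^{\mu_j - \gamma}}{2\,x^{1/2}},
\]
each is linear in $x$ divided by $x^{1/2}$, so a product of $L-1$ such factors has the claimed form.

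Given this building block, for fixed values of the remaining spectral variables $\prod_{i=1}^{n} B(\lambda_i)$ is, as a function of any single $\lambda_i$, a vector whose components take the form $x_i^{-(L-1)/2}$ times a polynomial of degree $\leq L-1$ in $x_i$, since the dependence on $\lambda_i$ factorises through the single tensor factor $B(\lambda_i)$. Contracting with the fixed covector $\bra{0}\prod_k C(\lambda_k^B)$ yields a scalar of the same structure in $\lambda_i$, and Lemma~\ref{symmetric} then extends this degree bound to every variable simultaneously, producing $\bar{\mathcal{S}}_n \in \mathbb{C}[x_1,\dots,x_n]$ of degree $\leq L-1$ in each $x_i$. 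The main subtle point is the sharp exponent $L-1$ rather than a naive $L$: it relies on the observation that, in this normalisation, the off-diagonal R-matrix entry $c$ is spectral-parameter independent, so precisely one of the $L$ local factors in every monomial of $B(\lambda)$ contributes no power of $x$.
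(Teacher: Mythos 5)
The paper does not prove this lemma itself; it defers to \cite{Galleas_SCP}, where the argument is exactly the one you outline: extract the degree bound from the monodromy matrix entry $B(\lambda)$ and propagate it through the matrix element $\bra{0}\prod_k C(\lambda_k^B)\prod_i B(\lambda_i)\ket{0}$, using commutativity of the $B$'s, the fact that the on-shell covector is fixed, and the symmetry of Lemma~\ref{symmetric}. So your route is essentially the intended one and the overall structure of your argument is sound.

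One step is stated incorrectly, although the conclusion survives. The $(1,2)$-entry of $T(\lambda)$ is \emph{not} a sum over configurations with exactly one auxiliary spin flip: since the six-vertex weights conserve arrow number, the auxiliary line may flip any odd number $2m+1$ of times ($m=0,1,\dots,\lfloor (L-1)/2\rfloor$), each flip contributing a $\lambda$-independent factor $c=\sinh\gamma$. Your sentence that ``precisely one of the $L$ local factors in every monomial of $B(\lambda)$ contributes no power of $x$'' is therefore false for $L\geq 3$. What saves the argument is that a monomial with $2m+1$ flips carries only $L-2m-1$ factors of the form $a(\lambda-\mu_j)$ or $b(\lambda-\mu_j)$, so it equals $x^{-(L-2m-1)/2}$ times a polynomial of degree at most $L-2m-1$; after multiplication by $x^{(L-1)/2}$ it becomes $x^{m}$ times such a polynomial, i.e.\ a genuine polynomial of degree at most $L-m-1\leq L-1$. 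Hence every term, not just the single-flip ones, lies in $\mathbb{C}_{L-1}[x]$, and it is the single-flip terms that generically realize the top degree $L-1$. With that correction your proof is complete; the remaining points (the diagonal twist only rescales $B$ and $C$ and feeds into the Bethe equations, the fixed covector does not disturb the structure in the $\lambda_i$, and Lemma~\ref{symmetric} upgrades the single-variable bound to all variables) are handled correctly.
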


The proofs of Lemmas \ref{symmetric} and \ref{polynomial} can be found in \cite{Galleas_SCP}. For completeness it is worth
remarking that, together with the asymptotic behavior of $\bar{\mathcal{S}}_n$ at infinity fixing the leading-term coefficient,
the above properties uniquely single out the on-shell scalar product among the solutions of \eqref{FS}.

\subsection{Partial differential equations}
\label{sec:PDE}

The possibility of expressing (\ref{FS}) as $\mathfrak{L}(\lambda_0) \mathcal{S}_n (\gen{\Lambda}^{1,n}) = 0$ with operator
$\mathfrak{L}$ given by (\ref{Lop}) is very suggestive. In fact, we have deliberately chosen to write formula (\ref{Lop}) as such in order to make
explicit the resemblance of $D_{\lambda}^{\lambda_0}$ with some sort of derivative. As we shall see, this is not out of reality when the action of
$D_{\lambda}^{\lambda_0}$ is restricted to a particular function space. 

\begin{proposition} \label{diff}
Let $\mathbb{C}[\gen{X}^{1,n}]$ denote the ring of polynomials in $n$ variables $x_1, x_2, \dots , x_n$ with complex coefficients.
Also, let $\mathbb{C}_m [\gen{X}^{1,n}] \subseteq \mathbb{C} [\gen{X}^{1,n}]$ be the subspace consisting of polynomials of degree at most $m$ in each variable $x_i$. Then 
\[
\label{real}
D_{x_i}^{x_0} = \sum_{k=0}^m \frac{(x_0 - x_i)^k}{k!} \frac{\partial^k }{\partial x_i^k}
\]
is a realization of (\ref{DIA}) on the space $\mathbb{C}_m [\gen{X}^{1,n}]$.
\end{proposition}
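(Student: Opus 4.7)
The plan is to reduce the proposition to the classical finite Taylor expansion. By the very definition \eqref{DIA}, the operator $D_{x_i}^{x_0}$ treats the variables $x_j$ with $j\neq i$ as inert parameters, and the same is true of the right-hand side of \eqref{real} because it only involves partial derivatives with respect to $x_i$. So I would first argue that, with all $x_j$ ($j \neq i$) frozen, the claim reduces to the following one-variable statement: for every polynomial $p(y) \in \mathbb{C}[y]$ of degree at most $m$,
\[
p(x_0) \;=\; \sum_{k=0}^{m} \frac{(x_0 - y)^k}{k!}\, p^{(k)}(y).
\]

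Second, by linearity of both sides in $p$, it suffices to verify this identity on the monomial basis $p(y)=y^\ell$ with $0\le \ell\le m$. Since $p^{(k)}(y) = \frac{\ell!}{(\ell-k)!}\, y^{\ell-k}$ for $0\le k\le \ell$ and vanishes for $k>\ell$, the sum collapses to
\[
\sum_{k=0}^{\ell} \binom{\ell}{k} (x_0 - y)^k\, y^{\,\ell-k} \;=\; \bigl(y + (x_0 - y)\bigr)^{\ell} \;=\; x_0^{\ell},
\]
by the binomial theorem, which is exactly $p(x_0)$. Note that in passing from the sum up to $m$ to the sum up to $\ell$ we used that $p^{(k)} \equiv 0$ for $k>\ell$; it is precisely the degree restriction $\ell \le m$ that guarantees the truncation at $k=m$ loses nothing.

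Finally I would put the two pieces together: given $f\in\mathbb{C}_m[\gen{X}^{1,n}]$, decompose $f$ into monomials in $x_i$ with coefficients in $\mathbb{C}[\gen{X}^{1,n}_{x_i}]$, apply the single-variable identity to each monomial, and recombine by linearity to conclude that \eqref{real} agrees with \eqref{DIA} on all of $\mathbb{C}_m[\gen{X}^{1,n}]$. There is no genuine obstacle here; the only point worth emphasizing in the write-up is the role of the degree bound $m$, which is what makes the otherwise infinite Taylor series collapse to a finite differential operator, so that $D_{x_i}^{x_0}$ is a bona fide differential operator and not merely a formal series.
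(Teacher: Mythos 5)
Your proof is correct and follows essentially the same route as the paper, which simply invokes the (finite) Taylor expansion of polynomials in $\mathbb{C}_m[\gen{X}^{1,n}]$ and refers to earlier work for details. Your write-up merely makes explicit the reduction to one variable and the verification on monomials via the binomial theorem, which is exactly the content behind the paper's one-line argument.
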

\begin{proof}
The proof follows readily from the Taylor series expansion of arbitrary functions in $\mathbb{C}_m [\gen{X}^{1,n}]$.
More details can be found in \cite{Galleas_2011, Galleas_proc}.
\end{proof} 

Although the realization (\ref{real}) exhibits an appealing structure, the solutions $\mathcal{S}_n$ we are interested do not immediately belong
to $\mathbb{C}_m [\gen{X}^{1,n}]$ as described in Lemma \ref{polynomial}. In fact, the desired on-shell scalar product belongs to the function
space $\mathcal{F}_n \coloneqq \mathbf{x}^{\frac{1-L}{2}} \mathbb{C}_{L-1}[\gen{X}^{1,n}]$ where $\mathbf{x} \coloneqq \prod_{\tilde{x} \in \gen{X}^{1,n}} \tilde{x}$.
Therefore we can see that the space $\mathcal{F}_n$ is not too far from $\mathbb{C}_{L-1}[\gen{X}^{1,n}]$. On the other hand, Lemma \ref{polynomial} also tells us
that the function $\bar{\mathcal{S}}_n$ actually lives in $\mathbb{C}_{L-1}[\gen{X}^{1,n}]$. In this way it is convenient to rewrite (\ref{FS}) in terms of functions
$\bar{\mathcal{S}}_n$ for which the realization (\ref{real}) is available. This motivates the following extra definitions.    

\begin{mydef}[Polynomial Bethe roots and inhomogeneity sets] \label{bethe_bar}
Write $y_i \coloneqq e^{2 \mu_i}$, $x_i^{B} \coloneqq e^{2 \lambda_i^B}$ and $q \coloneqq e^{\gamma}$. 
In addition to that, it is also convenient to define the sets $\bar{\mathcal{U}} \coloneqq \{ y_i \in \mathbb{C} \; | \; 1 \leq i \leq L \}$ and
$\bar{\mathcal{B}} \coloneqq \left\{ x_i^{B} \; | \; 1 \leq i \leq n \right\}$ such that
\< \label{BA_pol}
\prod_{y \in \bar{\mathcal{U}}} \frac{(x_i^B q - y q^{-1})}{(x_i^B  - y )} =  \frac{\phi_2}{\phi_1} \prod_{\substack{j=1 \\j \neq i}}^n \frac{(x_i^B q - x_j^B q^{-1})}{(x_i^B q^{-1} - x_j^B q)}  \; . 
\>
\end{mydef}

Taking into account the above discussion we then rewrite Eq. (\ref{FS}) as 
\[ \label{FSb}
\bar{\mathfrak{L}}(x_0) \; \bar{\mathcal{S}}_n (\gen{X}^{1,n}) = 0
\]
with operator $\bar{\mathfrak{L}}$ defined as
\[
\label{barLop}
\bar{\mathfrak{L}} (x_0) \coloneqq \bar{K}_0 + \sum_{x \in \gen{X}^{1,n}} \bar{K}_{x} \; D_{x}^{x_0} \; .
\]
The coefficients $\bar{K}_0$ and $\bar{K}_{x}$ in (\ref{barLop}) explicitly read
\<
\label{barcoeff}
\bar{K}_0 &\coloneqq& \phi_1 \prod_{y \in \bar{\mathcal{U}}} (x_0 q - y q^{-1}) \left[ \prod_{x \in \gen{X}^{1,n}} \frac{(x q - x_0 q^{-1})}{(x - x_0)} - \prod_{x \in \bar{\mathcal{B}}} \frac{(x q - x_0 q^{-1})}{(x - x_0)} \right] \nonumber \\
&& + \; \phi_2 \prod_{y \in \bar{\mathcal{U}}} (x_0  - y) \left[ \prod_{x \in \gen{X}^{1,n}} \frac{(x_0 q - x q^{-1})}{(x_0 - x)} - \prod_{x \in \bar{\mathcal{B}}} \frac{(x_0 q - x q^{-1})}{(x_0 - x)} \right] \nonumber \\
\bar{K}_x &\coloneqq& (q - q^{-1}) \frac{x_0}{(x_0 - x)}  \nonumber \\
&& \times \left[ \phi_1 \prod_{y \in \bar{\mathcal{U}}} (x q - y q^{-1}) \prod_{\tilde{x} \in \gen{X}_x^{1,n}} \frac{(\tilde{x} q - x q^{-1})}{(\tilde{x} - x)} - \phi_2 \prod_{y \in \bar{\mathcal{U}}} (x - y) \prod_{\tilde{x} \in \gen{X}_x^{1,n}} \frac{(x q - \tilde{x} q^{-1})}{(x - \tilde{x})} \right] \; . \nonumber \\
\>

The utility of this reformulation lies in the fact that $\bar{\mathfrak{L}}$, as defined by (\ref{barLop}), now acts on   
$\mathbb{C}_{L-1}[\gen{X}^{1,n}]$ and we can readily employ the differential realization (\ref{real}). Thus $\bar{\mathfrak{L}}$ can be 
regarded as a differential operator and its properties will be discussed in what follows.

\begin{theorem} \label{diff_struc}
The operator $\bar{\mathfrak{L}}$ is of the form 
\[
\label{Lstruc}
\bar{\mathfrak{L}}(x_0) = \frac{(q - q^{-1}) x_0}{\prod_{x \in \bar{\mathcal{B}}} (x_0 - x)} \sum_{k=0}^{L+n-2} x_0^k \; \gen{\Omega}_k \; ,
\]
where $\gen{\Omega}_k$ are differential operators that do not depend on $x_0$.
\end{theorem}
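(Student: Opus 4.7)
My plan is to analyze $\bar{\mathfrak{L}}(x_0)$ as an operator-valued rational function of the single variable $x_0$, acting on $\mathbb{C}_{L-1}[\gen{X}^{1,n}]$. Inspection of \eqref{barcoeff} shows that $\bar{K}_0$ has at most simple poles in $x_0$ at each element of $\gen{X}^{1,n}\cup\bar{\mathcal{B}}$, while $\bar{K}_x$ has a simple pole only at $x_0=x$, so the candidate poles of $\bar{\mathfrak{L}}$ lie in $\gen{X}^{1,n}\cup\bar{\mathcal{B}}$. The identity \eqref{Lstruc} will then follow from three facts I would establish in turn: (i) the residues at $x_0=x_i$ with $x_i\in\gen{X}^{1,n}$ vanish identically; (ii) $\bar{\mathfrak{L}}(0)=0$; (iii) $\bar{\mathfrak{L}}(x_0)=O(x_0^{L-1})$ as $x_0\to\infty$.

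The heart of the argument is (i), and this is where the main obstacle lies. Fix $x_i\in\gen{X}^{1,n}$ (treated as a variable distinct from the Bethe roots). Among the summands of $\bar{\mathfrak{L}}$, only $\bar{K}_0$ and $\bar{K}_{x_i}\,D_{x_i}^{x_0}$ are singular at $x_0=x_i$; moreover the singular contribution to $\bar{K}_0$ comes only from the $x=x_i$ factor of each $\prod_{x\in\gen{X}^{1,n}}$ product inside the two square brackets, the $\prod_{x\in\bar{\mathcal{B}}}$ products being regular there. A direct residue computation then shows that the two contributions from $\bar{K}_0$ reproduce, with an overall minus sign and multiplied by $(q-q^{-1})\,x_i$, precisely the two summands appearing in the square bracket of $\bar{K}_{x_i}$. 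Since $D_{x_i}^{x_0}|_{x_0=x_i}=\mathrm{id}$ by Definition \ref{dia}, the residue of $\bar{K}_{x_i}\,D_{x_i}^{x_0}$ equals $(q-q^{-1})\,x_i$ times that same bracket, and the total residue vanishes as an operator on $\mathbb{C}_{L-1}[\gen{X}^{1,n}]$.

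Facts (ii) and (iii) are more mechanical. For (ii), setting $x_0=0$ in \eqref{barcoeff} makes both brackets of $\bar{K}_0$ collapse to $q^{\pm n}-q^{\pm n}=0$, while $\bar{K}_x$ carries the overall factor $x_0/(x_0-x)$; hence $\bar{\mathfrak{L}}(0)=0$. For (iii), an expansion in $1/x_0$ shows that $\prod_{x\in\gen{X}^{1,n}}(\cdot)$ and $\prod_{x\in\bar{\mathcal{B}}}(\cdot)$ share the same leading value $q^{\mp n}$ at infinity, so each bracket in $\bar{K}_0$ is $O(1/x_0)$; multiplied by the degree-$L$ factors $\prod_y(x_0q-yq^{-1})$ and $\prod_y(x_0-y)$, this yields $\bar{K}_0=O(x_0^{L-1})$. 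In $\bar{K}_x\,D_x^{x_0}$ the prefactor $\bar{K}_x$ tends to a constant and, by Proposition \ref{diff}, $D_x^{x_0}$ acts on $\mathbb{C}_{L-1}[\gen{X}^{1,n}]$ as a polynomial of degree $L-1$ in $x_0$, so this term is $O(x_0^{L-1})$ as well.

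Combining (i)--(iii), the operator-valued function $\prod_{x\in\bar{\mathcal{B}}}(x_0-x)\,\bar{\mathfrak{L}}(x_0)$ is polynomial in $x_0$, vanishes at $x_0=0$, and is $O(x_0^{L+n-1})$ at infinity. It therefore equals $x_0$ times a polynomial in $x_0$ of degree at most $L+n-2$ with coefficients that are $x_0$-independent differential operators on $\mathbb{C}_{L-1}[\gen{X}^{1,n}]$. Factoring out the common $(q-q^{-1})$---which appears explicitly in every $\bar{K}_x$ and, through $1-q^{\pm 2}=\mp q^{\pm 1}(q-q^{-1})$, in the leading $x_0^{L-1}$ coefficient of $\bar{K}_0$---produces the decomposition \eqref{Lstruc} with the $\gen{\Omega}_k$ as claimed.
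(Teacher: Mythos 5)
Your proposal is correct and follows essentially the same route as the paper's proof: a pole analysis of $\bar{\mathfrak{L}}(x_0)$ as a rational function of $x_0$, showing that the simple poles at $x_0=x_i\in\gen{X}^{1,n}$ are removable (via the cancellation between the residue of $\bar{K}_0$ and that of $\bar{K}_{x_i}D_{x_i}^{x_0}$, using $D_{x_i}^{x_0}|_{x_0=x_i}=\mathrm{id}$) while the possible poles at the Bethe roots are retained in the denominator. You merely make explicit two points the paper leaves to ``explicit expansion of the coefficients,'' namely the overall factor of $x_0$ from $\bar{\mathfrak{L}}(0)=0$ and the degree bound $L+n-2$ from the $O(x_0^{L-1})$ behavior at infinity, and both of these checks are accurate.
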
 
\begin{proof}
We firstly substitute the realization (\ref{real}), adjusted to the present case, in (\ref{barLop}) and notice that the coefficients $\bar{K}_0$ and $\bar{K}_x$ only exhibits simple poles.
These poles are located at $x_0 = x_i$ and $x_0 = x_i^B$. It is not hard to see that the residues of $\bar{\mathfrak{L}}(x_0)$ at the points $x_0 = x_i$ vanish identically and, therefore, those points consist of removable singularities.
On the other hand, the residues of $\bar{\mathfrak{L}}(x_0)$ at the points $x_0 = x_i^B$ only vanish up to the Bethe ansatz condition (\ref{BA_pol}).
Thus we can conclude that $\bar{\mathfrak{L}}(x_0) \propto \prod_{x \in \bar{\mathcal{B}}} (x_0 - x)^{-1}$ and the remaining part
of (\ref{Lstruc}) is obtained from the explicit expansion of the coefficients $\bar{K}_0$ and $\bar{K}_x$ defined in (\ref{barcoeff}).
\end{proof}

The whole dependence of the equation (\ref{FSb}) with the variable $x_0$ is contained in the operator $\bar{\mathfrak{L}}$. As previously remarked, this property is a key ingredient to uncover a 
hierarchy of PDEs governing the scalar product $\bar{\mathcal{S}}_n$. 

\begin{corollary} \label{hierI}
The set $\{ \gen{\Omega}_k \mid 0 \leq k \leq L+n-2  \}$ forms a family of PDEs simultaneously integrated by $\bar{\mathcal{S}}_n$:
\[
\label{system}
\gen{\Omega}_k \; \bar{\mathcal{S}}_n (\gen{X}^{1,n}) = 0 \qquad \qquad 0 \leq k \leq L+n-2 \; .
\]
\end{corollary}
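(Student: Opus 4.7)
My plan is to derive the hierarchy directly by matching coefficients of $x_0$ in the functional equation (\ref{FSb}), using the structural form of $\bar{\mathfrak{L}}$ supplied by Theorem \ref{diff_struc}. The essential observation is that $\bar{\mathcal{S}}_n$ depends only on the variables in $\gen{X}^{1,n}$ and on the Bethe roots $\bar{\mathcal{B}}$, whereas $x_0$ is an auxiliary spectral variable that ranges over $\mathbb{C}$ in the original functional equation. Consequently, the whole $x_0$-dependence of $\bar{\mathfrak{L}}(x_0)\,\bar{\mathcal{S}}_n(\gen{X}^{1,n})$ is visible explicitly through (\ref{Lstruc}), and the identity (\ref{FSb}) is in fact an identity in $x_0$.

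Concretely, I would substitute (\ref{Lstruc}) into (\ref{FSb}) to obtain
\[
\frac{(q-q^{-1})\,x_0}{\displaystyle\prod_{x \in \bar{\mathcal{B}}} (x_0 - x)} \sum_{k=0}^{L+n-2} x_0^k \, \bigl(\gen{\Omega}_k \,\bar{\mathcal{S}}_n\bigr)(\gen{X}^{1,n}) = 0 \, ,
\]
and then note that the rational prefactor has only finitely many zeros and poles in $x_0$, so it is not identically zero. Clearing it (away from that discrete set, which suffices by continuity) reduces the equation to the polynomial identity
\[
\sum_{k=0}^{L+n-2} x_0^k \, \bigl(\gen{\Omega}_k \,\bar{\mathcal{S}}_n\bigr)(\gen{X}^{1,n}) = 0 \, ,
\]
valid for all $x_0 \in \mathbb{C}$. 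Since the coefficients $(\gen{\Omega}_k \,\bar{\mathcal{S}}_n)(\gen{X}^{1,n})$ are independent of $x_0$ and the monomials $\{1, x_0, x_0^2, \dots, x_0^{L+n-2}\}$ are linearly independent as functions of $x_0$, each coefficient must vanish separately, which is precisely (\ref{system}).

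The only point that requires slight care is the legitimacy of treating (\ref{FSb}) as an identity in $x_0$; this is already guaranteed by the derivation of Theorem \ref{funEQ}, where $\lambda_0$ (equivalently $x_0 = e^{2\lambda_0}$) is a free spectral variable. Likewise, the factorization in (\ref{Lstruc}) is an operatorial identity in $x_0$, its validity resting on the cancellation of residues at $x_0 = x_i$ (removable singularities) and at $x_0 = x_i^B$ (removed by the Bethe ansatz condition (\ref{BA_pol})), both of which are established in the proof of Theorem \ref{diff_struc}. I do not anticipate a genuine obstacle here: the corollary is a standard coefficient-extraction argument once the structural theorem and the polynomial realization of $D_{x}^{x_0}$ from Proposition \ref{diff} have been put in place.
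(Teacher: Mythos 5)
Your argument is exactly the paper's proof of Corollary \ref{hierI}, only written out in more detail: substitute the structural form \eqref{Lstruc} into \eqref{FSb}, strip the nonvanishing rational prefactor, and extract coefficients of the powers of $x_0$ using the fact that $\bar{\mathcal{S}}_n$ is independent of $x_0$. The proposal is correct and adds only the (welcome) explicit justification of the coefficient-matching step.
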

\begin{proof}
Consider Eq. (\ref{FSb}) taking into account Theorem \ref{diff_struc}. Since $\bar{\mathcal{S}}_n$ does not depend on $x_0$, we are thus left with (\ref{system}).
\end{proof}

\begin{remark}
The direct inspection of (\ref{system}) for small values of $L$ and $n$  indicates that any single equation might already able to
determine the desired on-shell polynomial solution up to an overall constant factor. This suggests that the set $\{ \gen{\Omega}_k \}_{k}$ might form a commutative family
of operators, i.e. $[ \gen{\Omega}_i, \gen{\Omega}_j ] = 0$. Here we shall not investigate this possibility but it certainly deserves further attention. 
\end{remark}

Although each differential operator $\gen{\Omega}_k$ can be written down for given values of $L$ and $n$, their explicit form can be rather
cumbersome. Fortunately, the leading-term operator $\gen{\Omega}_{L+n-2}$ exhibits an interesting structure and it can be compactly expressed for 
arbitrary values of $L$ and $n$. It turns out that the operator $\gen{\Omega}_{L+n-2}$ is explicitly given by
\<
\label{omega}
\gen{\Omega}_{L+n-2} = (\phi_1 q^{L+1-n} - \phi_2 q^{n-1}) \left( \sum_{\tilde{x} \in \bar{\mathcal{B}}} \tilde{x} - \sum_{x \in \gen{X}^{1,n}} x  \right) + \frac{1}{(L-1)!} \sum_{x \in \gen{X}^{1,n}} \mathcal{Y}_x \frac{\partial^{L-1}}{\partial x^{L-1}} \; , \nonumber \\
\>
with coefficients
\<
\label{YX}
\mathcal{Y}_x \coloneqq \phi_1 \prod_{y \in \bar{\mathcal{U}}} (x q - y q^{-1}) \prod_{\tilde{x} \in \gen{X}_x^{1,n}} \frac{(\tilde{x} q - x q^{-1})}{(\tilde{x} - x )} - \phi_2 \prod_{y \in \bar{\mathcal{U}}} (x - y ) \prod_{\tilde{x} \in \gen{X}_x^{1,n}} \frac{(x q - \tilde{x} q^{-1})}{(x - \tilde{x} )} \; . \nonumber \\
\>

The structure of $\gen{\Omega}_{L+n-2}$ as given by (\ref{omega}) and (\ref{YX}) exhibits interesting features on its own. Firstly, it is
similar to the ones found in \cite{Galleas_proc, Galleas_2015, Galleas_Lamers_2014} for six-vertex models with domain-wall boundaries. 
Moreover, analogously to the PDEs found in \cite{Galleas_proc, Galleas_2015, Galleas_Lamers_2014}, the scalar differential equation
$\gen{\Omega}_{L+n-2} \; \bar{\mathcal{S}}_n = 0$ can also be rewritten as a vector-valued first order PDE. We shall not perform this construction
here as it can be straightforwardly obtained from the analysis presented in \cite{Galleas_2015, Galleas_Lamers_2014}.   
Solutions of (\ref{omega}) for small values of $n$ will be discussed in what follows.

\subsubsection{Solutions and their properties}
\label{sec:sol}

The inspection of the equation $\gen{\Omega}_{L+n-2} ~ \bar{\mathcal{S}}_n (\gen{X}^{1,n}) = 0$ for small values of $L$ and $n$ reveals that
solutions $\bar{\mathcal{S}}_n \in \mathbb{C}_{L-1} [\gen{X}^{1,n}]$ only exist for parameters $x_i^{B}$ constrained by Bethe ansatz 
equations (\ref{BA_pol}). Our investigations seem to suggest that this single differential equation already determines
the on-shell scalar product for all $L$ and $n$. If this indeed the case we can readily see from (\ref{omega}) that the 
dependence of $\bar{\mathcal{S}}_n (\gen{X}^{1,n})$ with the parameters $x_i^{B}$ should only appear through the combination 
$\mathfrak{b} \coloneqq \sum_{\tilde{x} \in \bar{\mathcal{B}}} \tilde{x}$. We point out that this property is not apparent from 
Slavnov's determinant representation of on-shell scalar products \cite{Slavnov_1989}. Next we shall discuss the
integration of our PDE for some particular values of $n$.

\paragraph{Case $n=1$.} In that case let us first consider $L=2$ for which a general analysis of (\ref{omega}) is within reach. 
The operator (\ref{omega}) then produces the following ordinary differential equation,
\[ 
\label{n1L2}
\left[ \phi_1 \prod_{j=1}^2 (q x_1 - y_j q^{-1}) - \phi_2 \prod_{j=1}^2 (x_1 - y_j) \right] \frac{\dd \bar{\mathcal{S}}_1}{\dd x_1} + (\phi_1 q^2 - \phi_2) (x_1^B - x_1) \bar{\mathcal{S}}_1 = 0 \; ,
\]
whose general solution is given by
\[
\label{S1}
\bar{\mathcal{S}}_1 (x_1) = C \exp{\left\{  \frac{1}{2} \log{(\Lambda(x_1))} + \frac{\omega(x_1^B)}{\kappa} \arctanh{\left( \frac{\omega(x_1)}{\kappa} \right)} \right\} } \; . 
\]
In (\ref{S1}) $C$ stands for an integration constant while 
\[
\kappa \coloneqq \sqrt{ 2 \phi_1 \phi_2 \left[ q^2 (y_1 + y_2)^2 - 2 y_1 y_2 (1 + q^4) \right] - q^2 (\phi_1^2 + \phi_2^2)(y_1 - y_2)^2 } \; .
\]
The functions $\Lambda$ and $\omega$ are in their turn defined as
\<
\Lambda(x) &\coloneqq& \phi_1 \prod_{j=1}^2 (q^2 x - y_j) - \phi_2 q^2 \prod_{j=1}^2 (x - y_j) \nonumber \\
\omega(x) &\coloneqq& \ii q \left[ 2 x (\phi_2 - \phi_1 q^2) + (\phi_1 - \phi_2)(y_1 + y_2)  \right] \; .
\>

Now one can verify that the condition 
\[
\label{omk}
\left[ \frac{\omega(x_1^B)}{\kappa} \right]^2 = 1 
\] 
corresponds to the Bethe ansatz equation (\ref{BA_pol}). Moreover, assuming (\ref{omk}) and recalling the identity 
$\exp{\left[ \arctanh{(x)} \right]} = (1+x)^{\frac{1}{2}} (1-x)^{-\frac{1}{2}}$, one can check that the solution (\ref{S1}) 
collapses to the space $\mathbb{C}_{1} [x_1]$. 

Next we reverse our argument and consider the case $n=1$ and arbitrary $L$ starting from the assumption 
$\bar{\mathcal{S}}_1 \in \mathbb{C}_{L-1} [x_1]$. 
In that case $\frac{\dd^{L-1} \bar{\mathcal{S}}_1}{\dd x_1^{L-1}}$ is a constant and the integration of (\ref{omega})
is straightforward. We are thus left with
\[
\label{ss1}
\bar{\mathcal{S}}_1 \propto \frac{\displaystyle \left[ \phi_1 \prod_{y \in \bar{\mathcal{U}}} (x_1 q - y q^{-1}) - \phi_2 \prod_{y \in \bar{\mathcal{U}}} (x_1 - y )  \right]}{(x_1 - x_1^B)} \; .
\]
However, formula (\ref{ss1}) exhibits a simple pole when $x_1 \to x_1^B$. Thus, in order to have a polynomial, the residue of (\ref{ss1}) at this pole
should vanish. This requirement yields the following constraint on $x_1^B$,
\[
\prod_{y \in \bar{\mathcal{U}}} \frac{(x_i^B q - y q^{-1})}{(x_i^B  - y )} = \frac{\phi_2}{\phi_1} \; ,
\]
which can be immediately recognized as the Bethe ansatz equation (\ref{BA_pol}) for $n=1$. 

This analysis suggests that polynomial solutions and Bethe ansatz equations are equivalent requirements within our approach.

\paragraph{Case $n=2$.} The equation produced by (\ref{omega}) for $n=2$ and arbitrary $L$ assumes the form
\[
\label{n2L2}
Y_1 \frac{\partial^{L-1} \bar{\mathcal{S}}_2}{\partial x_1^{L-1}} + Y_2 \frac{\partial^{L-1} \bar{\mathcal{S}}_2}{\partial x_2^{L-1}} = V \bar{\mathcal{S}}_2 \; , 
\]
with coefficients explicitly defined as
\<
V &\coloneqq& ( \phi_1 q^{L-1} - \phi_2 q ) (x_1 + x_2 - \mathfrak{b}) \nonumber \\
Y_1 &\coloneqq& \frac{1}{(L-1)!} \left[ \phi_1 \frac{(x_2 q - x_1 q^{-1})}{(x_2 - x_1)} \prod_{y \in \bar{\mathcal{U}}} (x_1 q - y q^{-1}) - \phi_2 \frac{(x_1 q - x_2 q^{-1})}{(x_1 - x_2)} \prod_{y \in \bar{\mathcal{U}}} (x_1 - y)  \right] \nonumber \\
Y_2 &\coloneqq& \frac{1}{(L-1)!} \left[ \phi_1 \frac{(x_1 q - x_2 q^{-1})}{(x_1 - x_2)} \prod_{y \in \bar{\mathcal{U}}} (x_2 q - y q^{-1}) - \phi_2 \frac{(x_2 q - x_1 q^{-1})}{(x_2 - x_1)} \prod_{y \in \bar{\mathcal{U}}} (x_2 - y) \right] \; . \nonumber \\
\>
The requirement that $\bar{\mathcal{S}}_2 (x_1 , x_2)$ is a symmetric polynomial living in $\mathbb{C}_{L-1} [x_1 , x_2]$ implies 
a reduction of variables,
\[
\frac{ \partial^{L-1} \bar{\mathcal{S}}_2 (x_1 , x_2)}{\partial x_1^{L-1}} = H(x_2) \qquad \mbox{and} \qquad \frac{\partial^{L-1} \bar{\mathcal{S}}_2 (x_1 , x_2)}{\partial x_2^{L-1}} = H(x_1) \; , 
\]
where now $H(x) \in \mathbb{C}_{L-1} [x]$. Hence the resolution of (\ref{n2L2}) for $\bar{\mathcal{S}}_2$ yields the formula
\<
\label{SH}
\bar{\mathcal{S}}_2 (x_1 , x_2) &=& \frac{\Theta^{-1}}{(x_1 - x_2)} \frac{\left[ K(x_1, x_2) H(x_1) - K(x_2 , x_1) H(x_2) \right]}{(x_1 + x_2 - \mathfrak{b})} \; , 
\>
with $\Theta = (L-1)! \; (\phi_1 q^{L-1} - \phi_2 q)$ and
\[
K(x_1 , x_2) \coloneqq \phi_1 (x_1 q - x_2 q^{-1}) \prod_{y \in \bar{\mathcal{U}}} (x_2 q - y q^{-1}) + \phi_2 (x_2 q - x_1 q^{-1}) \prod_{y \in \bar{\mathcal{U}}} (x_2 - y) \; .
\]
Interestingly, expression (\ref{SH}) can be regarded as a sort of separation of variables induced by the requirement $\bar{\mathcal{S}}_2 (x_1 , x_2) \in \mathbb{C}_{L-1} [x_1 , x_2]$.
On the other hand, (\ref{SH}) appears to have singularities and, by symmetry, it suffices to focus on the variable $x_1$ to study this issue.  
Keeping in mind that $H(x)$ is a polynomial, we can see that the RHS of (\ref{SH}) only exhibits simple poles when $x_1 \to x_2$ and
$x_1 \to \mathfrak{b} - x_2$. As far as the pole $x_1 = x_2$ is concerned, the residue of (\ref{SH}) vanishes for arbitrary polynomials $H$,
so it is removable. By way of contrast, the residue of (\ref{SH}) at the pole $x_1 = \mathfrak{b} - x_2$ does 
not vanish identically. Thus, in order to have $\bar{\mathcal{S}}_2 (x_1 , x_2) \in \mathbb{C}_{L-1} [x_1 , x_2]$ we need to require the 
corresponding residue to vanish. This condition implies the following constraint on the function $H$,
\[
\label{KH}
K(\mathfrak{b} - x, x) H(\mathfrak{b} - x) - K(x, \mathfrak{b} - x) H(x) = 0 \; .
\]
Now, since $H(x) \in \mathbb{C}_{L-1} [x]$, we can write $H(x) = \sum_{i=0}^{L-1} c_i x^i$ and let (\ref{KH}) fix the coefficients $c_i$.
By doing so we immediately notice that this is only possible for values of $\mathfrak{b}$ constrained by a certain polynomial equation. 
The explicit form of this polynomial constraint for $L=2,3$ is given in what follows.

\begin{example} \label{EX}
For $L=2$  we find the constraint
\[ 
\label{jason}
\left[ q^2 (\phi_1 - \phi_2) \mathfrak{b}  + (y_1 + y_2)(\phi_2 q^2 - \phi_1) \right] \prod_{j=1}^2 \left[  q^2 \mathfrak{b}  - y_j (1 + q^2)  \right] = 0  \; , 
\]
while for $L=3$ it is given by
\[ 
\label{hugo}
\gen{\Upsilon} (\mathfrak{b})  \prod_{j=1}^3 \left[ q^2 \mathfrak{b}  - y_j (1 + q^2)  \right] = 0  \; . 
\]
In (\ref{hugo}) we have,
\[
\gen{\Upsilon} (\mathfrak{b}) \coloneqq q^3 (\phi_2 - \phi_1 q)^3 \mathfrak{b}^3  + 2 q^2 (\phi_2 - \phi_1 q)^2 (\phi_1 - \phi_2 q) \sum_{j=1}^3 y_j \mathfrak{b}^2  +  (\phi_2 - \phi_1 q) W \mathfrak{b} + W_0 \; , \nonumber \\
\] 
with parameters $W$ and $W_0$ defined as
\<
&& W \coloneqq q (\phi_1^2 + q^2 \phi_2^2) \left[ \sum_{j=1}^3 y_j^2 + 3 \sum_{1 \leq i < j \leq 3} y_i y_j   \right] \nonumber \\
&& \qquad \quad  - \; \phi_1 \phi_2 \left[ (1+q^4) \sum_{1 \leq i < j \leq 3} y_i y_j + 2 q^2 \left( \sum_{i=1}^3 y_i  \right)^2   \right] \nonumber \\
&& W_0 \coloneqq  (\phi_1 - \phi_2 q^3) \left\{ \phi_1 \phi_2 (q + q^{-1}) \left[ (q^2 - 4  + q^{-2}) y_1 y_2 y_3 -  \sum_{i=1}^3 y_i^2  \sum_{\substack{j=1 \\j \neq i}}^3 y_j   \right] \right. \nonumber \\
&& \qquad \qquad \qquad \qquad \quad + \left.  (\phi_1^2 + \phi_2^2) \prod_{1 \leq i < j \leq 3} (y_i +  y_j)  \right\} \; .
\>
\end{example}

It is interesting to notice that for both examples we find that the polynomial determining $\mathfrak{b}$ factorizes as
$P (\mathfrak{b})  \prod_{j=1}^L \left[ q^2 \mathfrak{b}  - y_j (1 + q^2)  \right] = 0$ where $P$ is a non-trivial polynomial. The trivial part
consists of a product of linear equations, where each equation depends on a single inhomogeneity $y_i$. Although those trivial terms are present in (\ref{jason})
and (\ref{hugo}), they might not correspond to scalar products of actual Bethe vectors. On the other hand, the polynomial equation
$P (\mathfrak{b}) =0$ seems to be the relevant one. In fact, one can notice its solutions yields the correct number of Bethe eigenvectors. 
For instance, for $n=2$ and $L=2$ we have only one solution in accordance with the number of admissible Bethe vectors. Similarly, for $n=2$ and $L=3$ we have
three solutions which is also the correct number of eigenvectors for the six-vertex model in that sector.

\section{Open boundary conditions}
\label{sec:open}

Vertex models of statistical mechanics can be defined with a variety of boundary conditions and a prominent role is played by
lattice systems with open (reflecting) boundary conditions. The extension of the QISM to this class of models
was put forward in \cite{Sklyanin_1988} wherein the so called \emph{reflection algebra} emerges as a fundamental structure governing
integrability at the boundaries. Prior to that, the exact diagonalization of certain spin-chain hamiltonians with open boundaries had been 
obtained in \cite{Gaudin_1971, Gaudin_book, Alcaraz_1987} by means of the celebrated Bethe ansatz. In fact, the case of open boundaries required
a generalization of the Bethe ansatz, which turns out to preserve the scattering theory interpretation. 

The inclusion of this type of boundary conditions within the framework of the QISM has several important consequences. For instance, this type
of boundary conditions have been shown to render systems fully invariant under the action of the underlying quantum affine algebra
\cite{Kulish_1991}. It is worth remarking here that the study of symmetries underlying a physical system is an important step of its analysis. 
As far as systems with open boundaries are concerned, the characterization of the model's symmetries has even found important applications within
the context of the AdS/CFT duality \cite{Hofman_2007, Galleas_2009}. In addition to that, the extension of the QISM to systems with open
boundaries paved the way for using a powerful algebraic machinery for studying six-vertex model correlation functions \cite{Kitanine_2007, Kitanine_2008}.

The evaluation of correlation functions within the framework of the QISM is intimately associated with the study of Bethe vectors scalar products. 
More precisely, the latter are building blocks of form factors \cite{Kitanine_1999, Kitanine_2007, Kitanine_2008} and this section is devoted
to the study of scalar products of the six-vertex model with open boundary conditions as solutions of certain PDEs. On the other hand it is 
worth remarking here that there exist alternative methods to study correlation functions for infinite lattices.
In particular, this study for open chains was presented in \cite{Kedem_1995a, Kedem_1995b} using the vertex operator approach and in 
\cite{Baseilhac_2013, Baseilhac_Kojima_2014a, Baseilhac_Kojima_2014b} by means of the $q$-Onsager algebra.

Here our analysis will be based on results recently presented in \cite{Galleas_openSCP} obtained via the algebraic-functional framework. As previously mentioned in \Secref{sec:funEQ}, here we are using 
the terminology \emph{on-shell scalar product} to refer to the case when only one Bethe vector is on shell. That is, the variables parameterizing
the Bethe vector are evaluated on solutions of the model's Bethe ansatz equations.
On-shell scalar products of the six-vertex model with open boundary conditions have been previously studied in \cite{Wang_2002, Kitanine_2007} 
using the method of \cite{Kitanine_1999}. In that case, a determinant representation resembling Slavnov's formula for the periodic case 
\cite{Slavnov_1989} has been obtained. Although an explicit formula for such a quantity is already available in the literature, our interest here 
lies in possible connections with the theory of PDEs and quantum many-body systems. 
In fact, similar connections have been observed for several quantities tackled through the algebraic-functional method.
For instance, this is the case for partition functions of six-vertex models with domain-wall boundary conditions 
\cite{Galleas_2011, Galleas_proc, Galleas_Lamers_2014} and the spectral problem of vertex models transfer matrices \cite{Galleas_2015}. 
In what follows we shall demonstrate that this is also the case for on-shell scalar products associated with the six-vertex model with 
diagonal open boundary conditions.

\subsection{Functional equation}
\label{sec:funEQ_open}

Among the main results of \cite{Galleas_openSCP} we have a set of two functional equations characterizing scalar products of Bethe vectors
for the six-vertex model with open boundary conditions. The simultaneous resolution of those functional equations, taking into account
certain properties expected for off-shell scalar products, yields the desired solution in terms of a multiple contour integral.
Similarly to the case with boundary twists considered in \cite{Galleas_SCP}, the on-shell case for open boundaries
can also be described by a single equation obtained as a linear combination of the equations presented in \cite{Galleas_openSCP}.
Here we shall make use of definitions and conventions introduced in \cite{Galleas_openSCP} since our present analysis builds on
results of the aforementioned work. Moreover, the open boundaries case will also require the following notation, in addition
to definitions \ref{defvar} and \ref{Uset} given in \Secref{sec:twist}.

\begin{mydef}[Double-row Bethe roots] \label{bethe_open}
Define the set $\mathcal{E} \coloneqq \{ \lambda_1^{\mathcal{B}} , \lambda_2^{\mathcal{B}} , \dots , \lambda_n^{\mathcal{B}} \}$
for $\lambda_k^{\mathcal{B}} \in \mathbb{C}$ solving the following system of Bethe ansatz equations for $1 \leq k \leq n$,
\<
\label{BA_open}
\frac{b(\lambda_k^{\mathcal{B}}+h)}{a(\lambda_k^{\mathcal{B}}-h)} \frac{b(\lambda_k^{\mathcal{B}}-\bar h)}{a(\lambda_k^{\mathcal{B}}+\bar h)} && \prod_{\mu \in \mathcal{U}} \frac{a(\lambda_k^{\mathcal{B}} - \mu)}{b(\lambda_k^{\mathcal{B}} - \mu)}\frac{a(\lambda_k^{\mathcal{B}} + \mu)}{b(\lambda_k^{\mathcal{B}} + \mu)} \nonumber \\ 
= && (-1)^{n-1} \prod_{\substack{i=1 \\ i \neq k}}^n \frac{a(\lambda_k^{\mathcal{B}} - \lambda_i^{\mathcal{B}})}{a(\lambda_i^{\mathcal{B}} - \lambda_k^{\mathcal{B}})} \frac{a(\lambda_k^{\mathcal{B}} + \lambda_i^{\mathcal{B}} +\gamma)}{b(\lambda_i^{\mathcal{B}} + \lambda_k^{\mathcal{B}})}   \; . 
\>
\end{mydef}

\begin{theorem}[Functional equation] \label{funEQ_open}
Let $\mathcal{T}_n \colon \mathbb{C}^n \to \mathbb{C}$ denote on-shell scalar products for the six-vertex model with open boundary conditions. Then
$\mathcal{T}_n$ satisfies the functional equation 
\[
\label{FS_open}
L_0 \; \mathcal{T}_n (\gen{\Lambda}^{1,n}) + \sum_{\lambda \in \gen{\Lambda}^{1,n}} L_{\lambda} \; \mathcal{T}_n (\gen{\Lambda}_{\lambda}^{0,n}) = 0 \; ,
\]
with coefficients $L_0$ and $L_{\lambda}$ defined as
\<
\label{coeff_open}
&& \begin{aligned}
L_0 \coloneqq \ & b(\lambda_0+h)b(\lambda_0-\bar h) \frac{a(2\lambda_0+\gamma)}{b(2\lambda_0+\gamma)} \prod_{\mu \in \mathcal{U}} a(\lambda_0 - \mu)a(\lambda_0 + \mu) \\ 
& \hphantom{+} \; \times \left[ \prod_{\lambda \in \gen{\Lambda}^{1,n}} \frac{a(\lambda - \lambda_0)}{b(\lambda - \lambda_0)}\frac{b(\lambda + \lambda_0)}{a(\lambda + \lambda_0)} -  \prod_{\lambda \in \mathcal{E}} \frac{a(\lambda - \lambda_0)}{b(\lambda - \lambda_0)}\frac{b(\lambda + \lambda_0)}{a(\lambda + \lambda_0)} \right]  \\
& + \; a(\lambda_0-h)a(\lambda_0+\bar h) \frac{b(2\lambda_0)}{a(2\lambda_0)} \prod_{\mu \in \mathcal{U}} b(\lambda_0 - \mu)b(\lambda_0 + \mu) \\ 
& \hphantom{+} \; \times \left[ \prod_{\lambda \in \gen{\Lambda}^{1,n}} \frac{a(\lambda_0 - \lambda)}{b(\lambda_0 - \lambda)}\frac{a(\lambda_0 + \lambda + \gamma)}{b(\lambda_0 + \lambda + \gamma)}  -  \prod_{\lambda \in \mathcal{E}} \frac{a(\lambda_0 - \lambda)}{b(\lambda_0 - \lambda)}\frac{a(\lambda_0 + \lambda + \gamma)}{b(\lambda_0 + \lambda + \gamma)}  \right] 
\end{aligned}  \nonumber \\ \\
&& \begin{aligned}
L_{\lambda} \coloneqq \ & \frac{a(2\lambda_0 +\gamma)}{a(\lambda_0 + \lambda)}\frac{c(\lambda_0 - \lambda)}{b(\lambda_0 - \lambda)} \frac{b(2\lambda)}{a(2\lambda)} \nonumber \\ 
& \times \left[ b(\lambda+h)b(\lambda-\bar h) \prod_{\mu \in \mathcal{U}} a(\lambda - \mu)a(\lambda + \mu) \prod_{\tilde{\lambda} \in \gen{\Lambda}^{1,n}_{\lambda}} \frac{a(\tilde{\lambda} - \lambda)}{b(\tilde{\lambda} - \lambda)}\frac{b(\tilde{\lambda} + \lambda)}{a(\tilde{\lambda} + \lambda)} \right. \nonumber \\
& \hphantom{\times} \; \left. - \; a(\lambda-h)a(\lambda+\bar h) \prod_{\mu \in \mathcal{U}} b(\lambda - \mu)b(\lambda + \mu) \prod_{\tilde{\lambda} \in \gen{\Lambda}^{1,n}_{\lambda}} \frac{a(\lambda - \tilde{\lambda})}{b(\lambda - \tilde{\lambda})}\frac{a(\lambda + \tilde{\lambda} + \gamma)}{b(\lambda + \tilde{\lambda} + \gamma)} \right] \; . \\
\end{aligned} 
\> 
\end{theorem}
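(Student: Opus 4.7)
The plan is to mimic the derivation used in Theorem~\ref{funEQ} for the twisted case and to adapt it to the reflection-algebra framework developed in \cite{Galleas_openSCP}. The natural starting point is the pair of functional equations for the \emph{off-shell} scalar product $\mathcal{T}^{\text{off}}_n$ established in \cite{Galleas_openSCP}. Each such equation depends on two independent sets of spectral variables, the ``auxiliary'' set $\gen{\Lambda}^{1,n}$ together with an extra $\lambda_0$, and the Bethe-type set $\lambda^{\mathcal{B}}_1,\dots,\lambda^{\mathcal{B}}_n$; the two equations originate from distinct exchange relations in the reflection algebra, essentially associated with the diagonal entries of Sklyanin's double-row monodromy matrix acting on the pseudovacuum.

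First, I would form a linear combination of these two off-shell equations whose coefficients can be read off directly from the structure of $L_0$. Specifically, I would weight them by the double-row pseudovacuum eigenvalues
\[
b(\lambda_0+h)\,b(\lambda_0-\bar h) \prod_{\mu \in \mathcal{U}} a(\lambda_0 - \mu)\,a(\lambda_0 + \mu) \quad \text{and} \quad a(\lambda_0-h)\,a(\lambda_0+\bar h) \prod_{\mu \in \mathcal{U}} b(\lambda_0 - \mu)\,b(\lambda_0 + \mu) \, ,
\]
dressed by the crossing-type factors $\frac{a(2\lambda_0+\gamma)}{b(2\lambda_0+\gamma)}$ and $\frac{b(2\lambda_0)}{a(2\lambda_0)}$ that encode the quantum-determinant content of the reflection algebra and account for the $\frac{b(2\lambda)}{a(2\lambda)}$ in $L_\lambda$. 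This particular combination should reproduce exactly the ``homogeneous'' $\gen{\Lambda}^{1,n}$-sector terms visible in $L_0$ and $L_\lambda$, while also generating additional $\mathcal{E}$-dependent residual terms of the form $\prod_{\lambda \in \mathcal{E}} \frac{a(\lambda-\lambda_0)}{b(\lambda-\lambda_0)}\frac{b(\lambda+\lambda_0)}{a(\lambda+\lambda_0)}$ and its ``mirror'' counterpart.

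Next, I would impose that $\mathcal{E}$ satisfies the Bethe ansatz conditions (\ref{BA_open}). These relations are designed precisely so that the $\mathcal{E}$-dependent residues cancel between the two sectors of the linear combination, turning it into a single homogeneous relation in $\mathcal{T}_n$ evaluated at $\gen{\Lambda}^{1,n}$ and at the permuted sets $\gen{\Lambda}^{0,n}_{\lambda}$. Equivalently, I would track the residues of the combined equation at the would-be poles $\lambda_0 = \pm \lambda_k^{\mathcal{B}}$: insisting that these residues vanish yields (\ref{BA_open}), which is the very statement that the Bethe vector is on shell, and leaves the clean linear structure of (\ref{FS_open}) with the stated $L_0$ and $L_\lambda$.

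The main obstacle is bookkeeping. Compared to the twisted case of Theorem~\ref{funEQ}, each inhomogeneity $\mu$ and each spectral variable $\lambda$ now enters in doubled form, both through the ``direct'' arguments $a(\lambda-\mu)$, $b(\lambda-\mu)$, $a(\tilde\lambda - \lambda)$, $b(\tilde\lambda - \lambda)$ and through their ``reflected'' counterparts $a(\lambda+\mu)$, $b(\lambda+\mu)$, $a(\lambda + \tilde\lambda +\gamma)$, $b(\lambda + \tilde\lambda)$, while the boundary K-matrix factors $b(\lambda\pm h)$, $a(\lambda\pm\bar h)$ must be propagated consistently through both equations. The technical heart of the proof is to verify that, once (\ref{BA_open}) is used, the $\mathcal{E}$-dependent terms and the residues at $\lambda_0 = \pm\lambda_k^{\mathcal{B}}$ cancel exactly, and that the remaining expression reassembles into the coefficients $L_0$ and $L_\lambda$ as displayed.
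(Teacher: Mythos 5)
Your overall strategy is exactly the one the paper uses: the proof in the text is a two-line affair that takes the two off-shell functional equations of \cite{Galleas_openSCP} (referred to there as Eq.\ type A and Eq.\ type D, arising from the two diagonal entries of the double-row monodromy acting on the pseudovacuum), forms a $\lambda_0$-dependent linear combination, and then imposes the Bethe ansatz equations \eqref{BA_open} so that the inhomogeneous, $\mathcal{E}$-dependent remainders collapse into the bracketed differences appearing in $L_0$. Where you deviate is in the weights of the combination: the paper multiplies Eq.\ type A by $b(\bar h - \lambda_0)\,\frac{a(2\lambda_0+\gamma)}{b(2\lambda_0+\gamma)}$ and Eq.\ type D by $a(\lambda_0+\bar h)$ --- nothing more. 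Your proposed weights, the full dressed pseudovacuum eigenvalues $b(\lambda_0+h)b(\lambda_0-\bar h)\prod_{\mu}a(\lambda_0-\mu)a(\lambda_0+\mu)$ and its partner, would double-count those factors, since the type A and type D equations already carry the vacuum eigenvalue products internally; only the $\bar h$-dependent boundary factors and the single crossing ratio need to be supplied from outside to reassemble $L_0$ and $L_\lambda$ as displayed. This is a bookkeeping error rather than a conceptual one (and one you could not have avoided without the explicit form of the two source equations), but as written your combination would not reproduce the stated coefficients. The rest of your outline --- in particular the observation that imposing \eqref{BA_open} is what kills the residual $\mathcal{E}$-dependent terms and that this is equivalent to demanding vanishing residues at $\lambda_0=\pm\lambda_k^{\mathcal{B}}$ --- matches the paper's (much terser) argument.
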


\begin{proof}
Considering the results of \cite{Galleas_openSCP}, we multiply Eq. type A by $b(\bar{h} - \lambda_0) \frac{a(2 \lambda_0 + \gamma)}{b(2 \lambda_0 + \gamma)}$
and add it to Eq. type D multiplied by $a(\lambda_0 + \bar{h})$. Then we are left with (\ref{FS_open}) by assuming that the variables $\lambda_k^{\mathcal{B}}$ are constrained by
the Bethe ansatz equations (\ref{BA_open}).
\end{proof}

It is interesting to notice that (\ref{FS_open}) exhibits the same structure as (\ref{FS}) and the functional relations derived in 
\cite{Galleas_2013, Galleas_proc, Galleas_Lamers_2014} for partition functions with domain-wall boundaries. In this sense, both
(\ref{FS}) and (\ref{FS_open}) seem to be part of a general structure accommodating several quantities associated with integrable
vertex models.  Next we shall discuss the passage from (\ref{FS_open}) to a hierarchy of PDEs.

\subsubsection{Operatorial formulation}
\label{sec:OPD_open}

The procedure described here will go along the lines of that presented in \Secref{sec:OPD}. 
We shall start by reformulating the functional equation described in Theorem~\ref{funEQ_open} in terms of 
operators $D_{\lambda}^{\lambda_0}$ defined in (\ref{DIA}). This is the first step for converting (\ref{FS_open})
into a family of PDEs. In fact, this reformulation only depends on the structure of the equation and we can readily notice 
that (\ref{FS_open}) exhibits the same structure as (\ref{FS}). Therefore, we shall restrict ourselves to presenting only the final results. 

Let $\mathfrak{M}$ be an operator defined as
\[
\label{Mop}
\mathfrak{M}(\lambda_0) \coloneqq L_0 + \sum_{\lambda \in \gen{\Lambda}^{1,n}} L_{\lambda} \; D_{\lambda}^{\lambda_0} \; .
\]
In terms of $\mathfrak{M}$ the functional equation (\ref{FS_open}) can be written as $\mathfrak{M}(\lambda_0) ~ \mathcal{T}_n (\gen{\Lambda}^{1,n}) = 0$.
One important aspect of this reformulation is that the operators $D_{\lambda}^{\lambda_0}$ entering (\ref{Mop}) are defined in a rather abstract 
way \eqref{DIA}. This feature leaves room for choosing suitable realizations depending on the function space accommodating the desired
on-shell scalar product $\mathcal{T}_n$. This will be one of the aspects discussed in \Secref{sec:properties_open}.

\subsubsection{Properties of $\mathcal{T}_n$}
\label{sec:properties_open}

Most of the discussion of \Secref{sec:properties} applies to Eq. (\ref{FS_open}) and here we restrict ourselves to presenting
only special properties expected from the scalar product $\mathcal{T}_n$. Those properties are meant to guide us through the resolution 
of (\ref{FS_open}) ensuring that we have actually selected the solution corresponding to on-shell scalar products associated with the 
$\mathcal{U}_q [\widehat{\alg{sl}}(2)]$ six-vertex model with open boundaries \cite{Galleas_openSCP}.
In fact, the properties we shall use to single out actual on-shell scalar products characterize a function space; and for such function space
(\ref{FS_open}) gives rise to a hierarchy of PDEs. The required properties will be stated in lemmas \ref{symmetric_open} and
\ref{polynomial_open} whose proofs are available in \cite{Galleas_openSCP}.

\begin{lemma}[Symmetric function] \label{symmetric_open}
The function $\mathcal{T}_n$ is symmetric under the permutation of variables, i.e. $s_{ij} \mathcal{T}_n = \mathcal{T}_n$ for all $1 \leq i, j \leq n$.
\end{lemma}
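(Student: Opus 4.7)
The plan is to leverage the underlying reflection-algebra structure. Recall from \cite{Galleas_openSCP} that $\mathcal{T}_n$ arises as a matrix element of the form
\[
\mathcal{T}_n(\gen{\Lambda}^{1,n}) = \langle 0 | \, \mathcal{C}(\lambda_1)\cdots\mathcal{C}(\lambda_n) \, | \Psi \rangle,
\]
where $\mathcal{C}(\lambda)$ denotes the appropriate off-diagonal entry of the double-row monodromy matrix attached to Sklyanin's reflection algebra for the open $\mathcal{U}_q[\widehat{\alg{sl}}(2)]$ six-vertex model, and $|\Psi\rangle \coloneqq \prod_{k=1}^n \mathcal{B}(\lambda_k^{\mathcal{B}})|0\rangle$ is the fixed on-shell Bethe vector built from the conjugate entries. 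Since $|\Psi\rangle$ does not depend on the variables being permuted, the lemma reduces to showing that the off-shell dual state $\langle 0|\,\mathcal{C}(\lambda_1)\cdots\mathcal{C}(\lambda_n)$ is invariant under any permutation of $\lambda_1,\dots,\lambda_n$.

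Because adjacent transpositions generate $S_n$, it suffices to verify invariance under $s_{i,i+1}$ for $1 \leq i \leq n-1$. To this end I would invoke the exchange relation between two $\mathcal{C}$-operators extracted from the reflection equation, which has the schematic form
\[
\mathcal{C}(\lambda)\,\mathcal{C}(\mu) = \mathcal{C}(\mu)\,\mathcal{C}(\lambda) + \Phi(\lambda,\mu),
\]
where $\Phi(\lambda,\mu)$ collects bilinears in the diagonal double-row entries $\mathcal{A}$ and $\mathcal{D}$ with explicit trigonometric scalar coefficients. Applying this identity to the pair at positions $i,i+1$ reduces the identity $s_{i,i+1}\mathcal{T}_n = \mathcal{T}_n$ to showing that $\Phi(\lambda_i,\lambda_{i+1})$ contributes nothing when sandwiched inside $\langle 0 | \cdots | \Psi\rangle$.

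The main obstacle lies precisely in controlling these correction contributions. In the most favorable conventions for the boundary $K$-matrix, one obtains $[\mathcal{C}(\lambda),\mathcal{C}(\mu)]=0$ outright and the symmetry is immediate. Otherwise, one would commute the diagonal operators appearing in $\Phi$ back through the remaining string of $\mathcal{C}$'s using the standard reflection-algebra relations, producing at each step either terms that re-sum symmetrically in $(\lambda,\mu)$ or terms that annihilate the dual reference state $\langle 0|$, which is a highest-weight vector with respect to $\mathcal{A}$ and $\mathcal{D}$. Once this cancellation is carried out, $s_{i,i+1}\mathcal{T}_n = \mathcal{T}_n$ follows, and the full symmetry asserted in Lemma~\ref{symmetric_open} is obtained by composition of adjacent transpositions.
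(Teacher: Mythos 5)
Your approach is essentially the one the paper relies on: the lemma is not proved in the text but deferred to \cite{Galleas_openSCP}, where the symmetry follows from the exchange relations of Sklyanin's double-row monodromy matrix, exactly as you propose. One substantive remark: for the six-vertex $R$-matrix with the \emph{diagonal} $K$-matrices considered here, the reflection equation yields $[\mathcal{C}(\lambda),\mathcal{C}(\mu)]=0$ (and likewise $[\mathcal{B}(\lambda),\mathcal{B}(\mu)]=0$) exactly, so your correction term $\Phi(\lambda,\mu)$ vanishes identically; you should simply verify that relation and conclude, because the fallback branch you sketch (recommuting diagonal operators through the remaining string and arguing cancellation against the reference states) is too schematic to stand as a proof and is in any case never needed. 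Note also the paper's remark following the lemma, which points to a genuinely different route: the symmetry is already manifest in the functional relation \eqref{FS_open} itself, and --- as stated explicitly for the twisted case --- any analytic solution of such a relation is automatically symmetric, which bypasses the operator formalism entirely.
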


\begin{remark}
As for the case with twisted boundary conditions, this symmetry property is already manifested in the functional relation \eqref{FS_open}.
\end{remark}

\begin{lemma}[Polynomial structure] \label{polynomial_open}
The function $\mathcal{T}_n$ is of the form
\[
\mathcal{T}_n ( \gen{\Lambda}^{1,n} ) = \frac{\bar{\mathcal{T}}_n ( \gen{X}^{1,n} )}{ \displaystyle \prod_{x \in \gen{X}^{1,n}} x^{L}} \; ,
\]
where $\bar{\mathcal{T}}_n \in \mathbb{C}_{2 L}[\gen{X}^{1,n}]$ is a multivariate polynomial of degree $2L$ in each variable separately.
\end{lemma}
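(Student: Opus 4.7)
My plan is to track the polynomial structure through the algebraic construction of $\mathcal{T}_n$ as a matrix element of Sklyanin's double-row monodromy matrix. Writing
\[
\mathcal{T}_n(\gen{\Lambda}^{1,n}) = \langle 0 | \mathcal{C}(\lambda_1) \cdots \mathcal{C}(\lambda_n) | \Psi \rangle,
\]
with $|\Psi\rangle = \mathcal{B}(\lambda_1^{\mathcal{B}}) \cdots \mathcal{B}(\lambda_n^{\mathcal{B}}) |0\rangle$ a fixed on-shell Bethe vector independent of the free variables $\lambda_i$, the desired structure must be read off from the co-vector $\langle 0 | \mathcal{C}(\lambda_1) \cdots \mathcal{C}(\lambda_n)$ alone. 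Since the symmetry in the $\lambda_i$'s is already guaranteed by Lemma~\ref{symmetric_open}, it suffices to analyse the dependence on a single variable $\lambda_i$ and then extend symmetrically.

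The first step is to determine the polynomial structure of a single creation operator. The ordinary row-to-row monodromy matrix $T(\lambda) = R_{0L}(\lambda - \mu_L) \cdots R_{01}(\lambda - \mu_1)$ has entries that are linear combinations of products of factors $a(\lambda-\mu_i)$, $b(\lambda-\mu_i)$, $c(\lambda-\mu_i)$. Each such factor, rewritten in the variable $x = e^{2\lambda}$, takes the form $x^{-1/2}$ times a polynomial of degree one in $x$ with coefficients built from $q$ and the $y_i$. Consequently each entry of $T(\lambda)$ is $x^{-L/2}$ times a polynomial of degree $L$ in $x$. Sklyanin's double-row object $\mathcal{U}(\lambda) \sim T(\lambda) K_-(\lambda) \widetilde{T}(\lambda)$, where $\widetilde{T}$ is built from $T(-\lambda)$, contains a second row-to-row factor with the same degree count in $x$; combining all contributions, including those of the diagonal boundary matrices $K_\pm$ carrying the parameters $h, \bar h$, one concludes that each matrix element of $\mathcal{C}(\lambda)$ relevant to the scalar product is of the form $x^{-L}$ times a polynomial of degree at most $2L$ in $x$.

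Next, I would assemble the $n$-fold product $\mathcal{C}(\lambda_1)\cdots\mathcal{C}(\lambda_n)$. Because the factors act on independent auxiliary spaces, after projection onto the quantum Hilbert space the result is a sum of products of entries of the $\mathcal{U}(\lambda_i)$, each inheriting the bound $\prod_i x_i^{-L}$ times a multivariate polynomial of degree at most $2L$ in each $x_i$ separately. Contracting on the left with $\langle 0|$ and on the right with the fixed on-shell vector $|\Psi\rangle$ is linear and preserves this structure, yielding $\mathcal{T}_n = \prod_i x_i^{-L}\, \bar{\mathcal{T}}_n$ with $\bar{\mathcal{T}}_n \in \mathbb{C}_{2L}[\gen{X}^{1,n}]$ as required.

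The main obstacle I anticipate is verifying that no spurious singularities are introduced along the way. The exchange relations of the reflection algebra generate apparent poles at $x_i = x_j$ and at $x_i x_j = q^{\pm 2}$, together with additional boundary-dependent factors $\sinh(\lambda \pm h)$, $\sinh(\lambda \pm \bar h)$ from $K_\pm$, both of which could in principle spoil polynomiality and inflate the effective degree. The resolution is to observe that $\mathcal{T}_n$ is ultimately a well-defined linear form on the finite-dimensional lattice Hilbert space and is therefore globally regular as a trigonometric function, so that all such apparent singularities must cancel; a careful bookkeeping of the $K$-matrix contributions and their interplay with the quantum determinant of $T$ then pins down the sharp degree~$2L$ and excludes contributions of higher order.
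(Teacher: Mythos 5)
The paper itself does not prove this lemma; it defers to \cite{Galleas_openSCP}, so your attempt must be judged on its own merits. Your overall strategy --- tracking the degree in $x=e^{2\lambda}$ through the explicit algebraic construction of the double-row creation operators and then using linearity of the contraction with the fixed on-shell vector --- is the right kind of argument and is in the spirit of how such structural lemmas are established in the cited reference. The reduction to a single variable via the symmetry of Lemma~\ref{symmetric_open}, and the observation that each factor $\sinh(\lambda-\mu_i+\gamma)$ etc.\ equals $x^{-1/2}$ times a degree-one polynomial in $x$, are both correct.

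The gap is at the decisive quantitative step, which is the entire content of the lemma. Carrying out your own count honestly for one double-row operator $\mathcal{U}(\lambda)\sim T(\lambda)K_-(\lambda)\widetilde{T}(\lambda)$: the entries of $T$ contribute $x^{-L/2}$ times degree $L$, the entries of $\widetilde{T}$ (built from $T(-\lambda)$) contribute another $x^{-L/2}$ times degree $L$, and the diagonal $K$-matrix entries $\sinh(\lambda\pm h)$ contribute $x^{-1/2}$ times degree $1$. The product is therefore $x^{-L-1/2}$ times a polynomial of degree $2L+1$ --- a \emph{half-integer} power prefactor and one unit too much degree --- which cannot be rewritten in the claimed form $x^{-L}$ times a degree-$2L$ polynomial without exhibiting an explicit cancellation. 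You acknowledge that the $K_\pm$ factors ``could in principle \dots inflate the effective degree'' and then assert that ``careful bookkeeping \dots pins down the sharp degree $2L$'', but this is precisely the statement to be proven, not a step you have carried out. Closing it requires either the exact normalization of the double-row operators entering the definition of $\mathcal{T}_n$ (note the factors $b(2\lambda)/a(2\lambda)$ and $a(2\lambda_0+\gamma)/b(2\lambda_0+\gamma)$ appearing in the coefficients \eqref{coeff_open}, which signal that the operators are not the bare products you wrote) or the boundary crossing symmetry $\mathcal{B}(-\lambda-\gamma)\propto\mathcal{B}(\lambda)$, i.e.\ invariance under $x\mapsto q^{-2}x^{-1}$, which is what actually forces the Laurent span to be the symmetric window $x^{-L},\dots,x^{L}$. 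Separately, your appeal to global regularity of $\mathcal{T}_n$ as a trigonometric function correctly disposes of the apparent poles from the exchange relations (which in fact never arise if one does not reorder the operators), but regularity alone bounds nothing: an entire function of $\lambda$ can have arbitrarily high degree in $x$, so it cannot substitute for the missing degree computation.
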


\subsection{Partial differential equations}
\label{sec:PDE_open}

Solutions of $\mathfrak{M}(\lambda_0) ~ \mathcal{T}_n ( \gen{\Lambda}^{1,n} ) = 0$ corresponding to on-shell scalar products are 
characterized by Lemmas \ref{symmetric_open} and \ref{polynomial_open}. In other words, the desired scalar products consist of
solutions of our equation in the space $\mathcal{G}_n \coloneqq \mathbf{x}^{-L} \mathbb{C}_{2L}[\gen{X}^{1,n}]$.
The situation here is analogous to the case discussed in \Secref{sec:PDE} and a differential description of on-shell scalar products
for the case with open boundaries is available in terms of functions $\bar{\mathcal{T}}_n$. This is due to the fact that
$\bar{\mathcal{T}}_n \in \mathbb{C}_{2 L}[\gen{X}^{1,n}]$ for which (\ref{real}) can be employed.
The following additional definitions will also be useful. 

\begin{mydef}[Polynomial double-row Bethe roots] \label{bethe_bar_open}
Considering Definition \ref{bethe_bar} we further write  $t\coloneqq e^h$ and $\bar{t}\coloneqq e^{\bar{h}}$. We then define
$\bar{\mathcal{E}} \coloneqq \left\{ x_1^{\mathcal{B}}, x_2^{\mathcal{B}}, \dots , x_n^{\mathcal{B}} \right\}$ formed by variables
$x_k^{\mathcal{B}} = e^{2\lambda_k^{\mathcal{B}}} \in \mathbb{C}$ satisfying the polynomial form of (\ref{BA_open}) for $1 \leq k \leq n$, namely 
\< \label{BA_open_pol}
\begin{aligned}
\frac{\left(x_k^{\mathcal{B}} t-t^{-1} \right)}{\left(x_k^{\mathcal{B}} t^{-1} q-t q^{-1} \right)} \frac{\left(x_k^{\mathcal{B}} \bar{t}^{-1}-\bar{t}\right)}{\left(x_k^{\mathcal{B}} \bar{t} q-\bar{t}^{-1} q^{-1}\right)} & \prod_{y \in \bar{\mathcal{U}}} \frac{\left(x_k^{\mathcal{B}} q-y q^{-1}\right)}{\left(x_k^{\mathcal{B}}-y\right)} \frac{\left(x_k^{\mathcal{B}} q - y^{-1} q^{-1}\right)}{\left(x_k^{\mathcal{B}} - y^{-1} \right)} \\
& =  \prod_{\substack{i=1 \\ i \neq k}}^n \frac{\left(x_k^{\mathcal{B}} q - x_i^{\mathcal{B}} q^{-1}\right)}{\left(x_k^{\mathcal{B}} q^{-1} - x_i^{\mathcal{B}} q \right)} \frac{\left( x_k^{\mathcal{B}} q^2 - (x_i^{\mathcal{B}})^{-1}  q^{-2}\right)}{\left( x_k^{\mathcal{B}} - (x_i^{\mathcal{B}})^{-1} \right)}  \; .
\end{aligned}
\>
\end{mydef}

Next we rewrite (\ref{FS_open}) in terms of functions $\bar{\mathcal{T}}_n$ described in Lemma \ref{polynomial_open}. With the help
of Definition \ref{dia} we are then left with the equation $\bar{\mathfrak{M}} (x_0) ~ \bar{\mathcal{T}}_n (\gen{X}^{1,n}) = 0$
with operator $\bar{\mathfrak{M}}$ defined as
\[
\label{barLop_open}
\bar{\mathfrak{M}} (x_0) \coloneqq \bar{L}_0 + \sum_{x \in \gen{X}^{1,n}} \bar{L}_{x} \; D_{x}^{x_0} \; .
\]
The coefficients $\bar{L}_0$ and $\bar{L}_x$ in (\ref{barLop_open}) are in their turn given by
\[
\label{barcoeff_open}
\begin{aligned}
\bar{L}_0 \coloneqq \ & (x_0^{\frac{1}{2}} t - x_0^{-\frac{1}{2}} t^{-1})(x_0^{\frac{1}{2}} \bar{t}^{-1} - x_0^{- \frac{1}{2}} \bar{t}) \frac{\left( x_0 q^2 - x_0^{-1} q^{-2} \right)}{\left( x_0 q - x_0^{-1} q^{-1} \right)} \prod_{y\in\bar{\mathcal{U}}} (x_0 q-y q^{-1})(x_0 q - y^{-1} q^{-1}) \\
& \hphantom{+} \; \times \left[ \prod_{x \in \gen{X}^{1,n}} \frac{\left( x q - x_0 q^{-1} \right) }{\left( x - x_0 \right)} \frac{\left( x - x_0^{-1} \right)}{\left( x q - x_0^{-1} q^{-1} \right)} 
- \prod_{x \in \bar{\mathcal{E}}}  \frac{\left( x q - x_0 q^{-1} \right) }{\left( x - x_0 \right)} \frac{\left( x - x_0^{-1} \right)}{\left( x q - x_0^{-1} q^{-1} \right)}  \right] \\
& + \; (x_0^{\frac{1}{2}} q t^{-1} - x_0^{- \frac{1}{2}} q^{-1} t)(x_0^{\frac{1}{2}} q \bar{t} - x_0^{-\frac{1}{2}} q^{-1} \bar{t}^{-1}) \frac{\left( x_0 - x_0^{-1} \right)}{\left( x_0 q - x_0^{-1} q^{-1} \right)} \prod_{y\in\bar{\mathcal{U}}} (x_0 - y)(x_0  - y^{-1}) \\
& \hphantom{+} \; \times \left[ \prod_{x \in \gen{X}^{1,n}} \frac{\left( x_0 q - x q^{-1} \right)}{\left(x_0 - x \right)} \frac{\left( x_0  q^2 - x^{-1} q^{-2} \right)}{\left( x_0  q - x^{-1} q^{-1}\right)} 
- \prod_{x \in \bar{\mathcal{E}}} \frac{\left( x_0 q - x q^{-1} \right)}{\left(x_0 - x \right)} \frac{\left( x_0  q^2 - x^{-1} q^{-2} \right)}{\left( x_0  q - x^{-1} q^{-1}\right)}  \right] \\
\end{aligned} 
\]
\[
\begin{aligned}
\bar{L}_x \coloneqq \ & x_0 \frac{\left( q-q^{-1} \right)}{\left( x_0 - x \right)} \frac{\left( x_0 q^2 - x_0^{-1} q^{-2} \right)}{\left( x_0 q - x^{-1} q^{-1} \right)}  \frac{\left(x - x^{-1} \right)}{\left( x q - x^{-1} q^{-1} \right)} \\ 
& \times \left[ (x^{\frac{1}{2}} t - x^{-\frac{1}{2}} t^{-1})(x^{\frac{1}{2}} \bar{t}^{-1} - x^{-\frac{1}{2}} \bar{t}) \prod_{y\in\bar{\mathcal{U}}} (x q-y q^{-1})(x q - y^{-1} q^{-1}) \right. \\
& \qquad \qquad \qquad \qquad \qquad \qquad \qquad \qquad \times \left. \prod_{\tilde{x} \in \gen{X}^{1,n}_{x}} \frac{\left( \tilde{x} q - x q^{-1} \right)}{\left( \tilde{x} - x \right)} \frac{\left( \tilde{x} - x^{-1} \right)}{\left( \tilde{x} q - x^{-1} q^{-1} \right)} \right. \\
& \hphantom{\times} \; \left. - \; (x^{\frac{1}{2}} q t^{-1} - x^{-\frac{1}{2}} q^{-1} t)(x^{\frac{1}{2}} q \bar{t} - x^{-\frac{1}{2}} q^{-1} \bar{t}^{-1}) \prod_{y\in\bar{\mathcal{U}}} (x-y)(x  - y^{-1}) \right. \\
& \qquad \qquad \qquad \qquad \qquad \qquad \qquad \qquad \qquad \times \left. \prod_{\tilde{x} \in \gen{X}^{1,n}_{x}} \frac{\left( x q - \tilde{x} q^{-1} \right)}{\left( x - \tilde{x} \right)}\frac{\left( x  q^2 -  \tilde{x}^{-1} q^{-2} \right)}{\left( x  q - \tilde{x}^{-1} q^{-1} \right)}  \right] \; . 
\end{aligned} \nonumber 
\]

The operator $\bar{\mathfrak{M}}$ now acts on $\mathbb{C}_{2L} [\gen{X}^{1,n}]$ and we can
simply use the differential realization (\ref{DIA}) adapted to this case. Thus $\bar{\mathfrak{M}}$ can also be regarded as a differential
operator. Its structure is described in the following theorem.

\begin{theorem} \label{diff_struc_open}
The operator $\bar{\mathfrak{M}}$  is of the form 
\[
\label{Lstruc_open}
\bar{\mathfrak{M}} (x_0) = \frac{(q - q^{-1})}{\displaystyle \prod_{x \in \gen{X}^{1,n}} (x_0 q - x^{-1} q^{-1}) \prod_{x \in \bar{\mathcal{E}}} (x_0 q - x^{-1} q^{-1}) (x_0 - x)} \sum_{k=0}^{2L + 3n} x_0^k ~ \gen{\Phi}_k \; .
\]
In (\ref{Lstruc_open}) we use $\gen{\Phi}_k$ to denote differential operators independent of $x_0$.
\end{theorem}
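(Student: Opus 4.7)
The plan is to mimic the proof of Theorem~\ref{diff_struc} closely, exploiting that (\ref{barLop_open}) has the same overall shape as $\bar{\mathfrak{L}}$ in the twisted case. The first step is to substitute the polynomial differential realization (\ref{real}), with $m=2L$ as appropriate for $\mathbb{C}_{2L}[\gen{X}^{1,n}]$ by Lemma~\ref{polynomial_open}, into $\bar{\mathfrak{M}}(x_0)$. This renders $\bar{\mathfrak{M}}(x_0)$ a rational function of $x_0$ with finite-order differential-operator coefficients in $x_1,\ldots,x_n$, and it suffices to show that, after clearing the denominator displayed in (\ref{Lstruc_open}), what remains is a polynomial in $x_0$ of degree at most $2L+3n$, with the overall factor $(q-q^{-1})$ extracted from the common prefactor of the coefficients $\bar{L}_x$.

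The heart of the argument is a residue analysis. From (\ref{barcoeff_open}), the a priori simple poles of $\bar{\mathfrak{M}}(x_0)$ sit at $x_0=x$ and at $x_0=x^{-1}q^{-2}$ for each $x\in\gen{X}^{1,n}\cup\bar{\mathcal{E}}$, together with two classes of apparent singularities absent from the twisted case: the pair $x_0=\pm q^{-1}$ where $x_0 q-x_0^{-1}q^{-1}$ vanishes, and $x_0=0$ arising from the half-integer powers of the boundary factors. The poles at $x_0=x$ with $x\in\gen{X}^{1,n}$ cancel by the same mechanism as in Theorem~\ref{diff_struc}: the residue from $\bar{L}_0$ is offset by that of the single term $\bar{L}_x D_x^{x_0}$ whose coefficient carries the matching $(x_0-x)^{-1}$ factor, using $(D_x^{x_0}f)|_{x_0=x}=f$. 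At $x_0=\pm q^{-1}$ the apparent poles from the two summands of $\bar{L}_0$ have opposite residues and therefore cancel: the two summands' boundary prefactors, inhomogeneity products, and bracketed differences all coincide at these points, while the crossing ratios $(x_0 q^2-x_0^{-1}q^{-2})/(x_0 q-x_0^{-1}q^{-1})$ and $(x_0-x_0^{-1})/(x_0 q-x_0^{-1}q^{-1})$ have opposite residues there. At $x_0=0$ each bracketed difference in $\bar{L}_0$ develops a simple zero, since the two competing products both tend to the $x$-independent constants $q^{\pm 2n}$, compensating the $x_0^{-1}$ from the boundary factors. The remaining poles — at $x_0=x^{\mathcal{B}}\in\bar{\mathcal{E}}$ and at $x_0=x^{-1}q^{-2}$ for $x\in\gen{X}^{1,n}\cup\bar{\mathcal{E}}$ — are genuine and reproduce the denominator displayed in (\ref{Lstruc_open}).

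The polynomial degree is then controlled by the large-$x_0$ behaviour. Each summand of $\bar{L}_0$ would naively scale as $x_0^{2L+1}$ (the boundary factor contributes $x_0$ and the inhomogeneity product contributes $x_0^{2L}$), but the bracketed differences supply an extra factor $1/x_0$ because the leading $x_0\to\infty$ limits of the products over $\gen{X}^{1,n}$ and over $\bar{\mathcal{E}}$ coincide; hence $\bar{L}_0=O(x_0^{2L})$. The coefficient $\bar{L}_x$ tends to a constant as $x_0\to\infty$, and $D_x^{x_0}$ acting on polynomials of degree $2L$ contributes at most $x_0^{2L}$. Multiplying $\bar{\mathfrak{M}}(x_0)$ by the degree-$3n$ denominator then yields the bound $2L+3n$ and the expansion $\sum_{k=0}^{2L+3n}x_0^k\,\gen{\Phi}_k$ with $x_0$-independent operators $\gen{\Phi}_k$.

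The main obstacle will be establishing the cancellation of the two new families of spurious singularities at $x_0=\pm q^{-1}$ and $x_0=0$, which have no analogue in the twisted proof and rest on crossing-type identities specific to the reflection-algebra setting. Both cancellations require combining the two summands of $\bar{L}_0$ rather than examining each in isolation. A secondary, bookkeeping-heavy task is the degree count: one must verify that the leading-order cancellation in $\bar{L}_0$ between the two summands is exactly of order $1/x_0$, and no stronger, so that the actual degree does not drop below the claimed $2L+3n$.
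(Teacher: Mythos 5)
Your proposal follows essentially the same route as the paper's proof: insert the differential realization, show the apparent singularities at $x_0\in\gen{X}^{1,n}$ and $x_0=\pm q^{-1}$ (and, in your case, also $x_0=0$) are removable while those at $x_0\in\bar{\mathcal{E}}$ and $x_0^{-1}\in q^2\,(\gen{X}^{1,n}\cup\bar{\mathcal{E}})$ produce the stated denominator, and then bound the degree by noting the $O(x_0^{2L+1})$ leading terms of $\bar{L}_0$ cancel. Your argument is correct and in fact spells out the cancellation mechanisms (the crossing identities at $x_0=\pm q^{-1}$ and the regularity at $x_0=0$ coming from the half-integer boundary factors) that the paper only asserts.
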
 
\begin{proof}
We proceed by inserting the differential realization (\ref{DIA}) into (\ref{barLop_open}). Then we notice from (\ref{barcoeff_open})
that $\bar{\mathfrak{M}}$, as a function of $x_0$,  only contains simple poles. These are located at $x_0 = \pm q^{-1}$, $x_0 \in \gen{X}^{1,n}$,
$x_0^{-1} \in q^2 ~\gen{X}^{1,n}$, $x_0 \in \bar{\mathcal{E}}$ and $x_0^{-1} \in q^2 ~\bar{\mathcal{E}}$. The residues at the poles
$x_0 = \pm q^{-1}$ and $x_0 \in \gen{X}^{1,n}$ vanish identically, hence they consist of removable singularities. On the other hand, the 
residues at the remaining poles do not vanish identically and give rise to the denominator in (\ref{Lstruc_open}). The numerator is now
a polynomial expression in $x_0$. The degree of this polynomial can be determined as follows. In order to compensate the dependence with $x_0$ in the denominator of \eqref{Lstruc_open} we need to include a
factor of $x_0^{3n}$ in the polynomial expansion. Next we count the powers coming from $\bar{L}_x D_x^{x_0}$ and notice that the leading term is proportional to $x_0^{2L}$.
Although the leading term coming from $\bar{L}_0$ seems to contain one more power of $x_0$, its careful inspection shows
that the coefficient of that term actually vanishes. Thus $\bar{L}_0$ also yields a leading term proportional to $x_0^{2L}$.
Thus we are left with a polynomial in $x_0$ of degree $2L+3n$.
\end{proof}

Our analysis will now proceed along the lines described in \Secref{sec:PDE}. For that the reformulation of (\ref{FS_open})
as $\bar{\mathfrak{M}} (x_0) \; \bar{\mathcal{T}}_n (\gen{X}^{1,n}) = 0$ plays a crucial role.

\begin{corollary} \label{hierII}
The set of differential operators $\{ \gen{\Phi}_k  \}$ constitutes a hierarchy of PDEs satisfying
\[
\label{system_open}
\gen{\Phi}_k  \; \bar{\mathcal{T}}_n (\gen{X}^{1,n}) = 0 \qquad \mbox{for} \quad 0 \leq k \leq 2L+3n \; .
\]
\end{corollary}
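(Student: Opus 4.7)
The plan is to mirror the argument used for Corollary \ref{hierI}, with Theorem \ref{diff_struc_open} now playing the role that Theorem \ref{diff_struc} played there. The starting point is the operatorial reformulation $\bar{\mathfrak{M}} (x_0) \, \bar{\mathcal{T}}_n (\gen{X}^{1,n}) = 0$ of the functional equation \eqref{FS_open}. I would simply substitute the expansion \eqref{Lstruc_open} of $\bar{\mathfrak{M}}(x_0)$ into this identity.

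The key observation to exploit is that the inhomogeneities in $\bar{\mathcal{U}}$ and the Bethe roots in $\bar{\mathcal{E}}$ are regarded as fixed parameters independent of the dynamical variables $\gen{X}^{1,n}$. Consequently, the scalar prefactor appearing in \eqref{Lstruc_open} is a non-vanishing meromorphic function of $x_0$ alone; it commutes with every differential operator $\gen{\Phi}_k$ and can be cleared at generic values of $x_0$. The functional equation thereby reduces to the polynomial identity
\[
\sum_{k=0}^{2L+3n} x_0^k \, \bigl( \gen{\Phi}_k \, \bar{\mathcal{T}}_n (\gen{X}^{1,n}) \bigr) = 0 \; ,
\]
which must hold for all admissible values of the auxiliary parameter $x_0$.

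To conclude, I would invoke the fact that $\bar{\mathcal{T}}_n$ depends solely on $\gen{X}^{1,n}$ and carries no $x_0$-dependence, so the above relation has to be read as a polynomial equation in the indeterminate $x_0$ whose coefficients are functions on $\gen{X}^{1,n}$. All such coefficients must therefore vanish separately, which is precisely the hierarchy \eqref{system_open}. The only genuinely non-routine ingredient of the argument is the clean separation of the whole $x_0$-dependence of $\bar{\mathfrak{M}}(x_0)$ into the scalar prefactor of \eqref{Lstruc_open}; but this has already been established as Theorem \ref{diff_struc_open}, so the remainder is immediate and no real obstacle persists.
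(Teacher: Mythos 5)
Your argument is correct and coincides with the paper's proof, which simply refers back to Corollary \ref{hierI}: substitute the expansion of Theorem \ref{diff_struc_open}, clear the nonvanishing $x_0$-dependent prefactor, and use that $\bar{\mathcal{T}}_n$ is independent of $x_0$ so each coefficient of the resulting polynomial in $x_0$ must vanish separately. Your write-up is just a more explicit version of the same reasoning.
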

\begin{proof}
The proof is the same as for Corollary \ref{hierI}.
\end{proof}

Similarly to the case with twisted boundaries, the leading-term coefficient $\gen{\Phi}_{2L + 3n}$ in the expansion (\ref{Lstruc_open}) also plays a special role.
This coefficient can be written down explicitly for arbitrary values of $n$ and $L$ in a compact form. More precisely, it reads
\<
\label{phi}
\gen{\Phi}_{2L + 3n} &=& \left( t \bar{t}^{-1} q^{2L+1} - t^{-1} \bar{t} q^{4n-1} \right) \left[ \mathfrak{c} - \sum_{x \in \gen{X}^{1,n}} \left( x q + x^{-1} q^{-1} \right) \right]  \nonumber \\
&& + \; \frac{q^{2n + 1}}{(2L)!} \sum_{x \in \gen{X}^{1,n}} \frac{\left(x - x^{-1} \right)}{\left( x q - x^{-1} q^{-1} \right)}  \mathcal{Z}_x \frac{\partial^{2L}}{\partial x^{2L}} \; ,
\>
where the parameter $\mathfrak{c} \coloneqq \sum_{x \in \bar{\mathcal{E}}} \left( x q + x^{-1} q^{-1} \right)$ contains the whole dependence
of $\gen{\Phi}_{2L + 3n}$ with Bethe roots. In its turn the function $\mathcal{Z}_x$ has been defined as
\< \label{ZX}
\mathcal{Z}_x &\coloneqq& (x^{\frac{1}{2}} t - x^{-\frac{1}{2}} t^{-1})(x^{\frac{1}{2}} \bar{t}^{-1} - x^{-\frac{1}{2}} \bar{t}) \prod_{y\in\bar{\mathcal{U}}} (x q-y q^{-1})(x q - y^{-1} q^{-1})  \nonumber \\
&& \qquad \qquad \qquad \qquad \qquad \qquad \qquad \qquad \times  \prod_{\tilde{x} \in \gen{X}^{1,n}_{x}} \frac{\left( \tilde{x} q - x q^{-1} \right)}{\left( \tilde{x} - x \right)} \frac{\left( \tilde{x} - x^{-1} \right)}{\left( \tilde{x} q - x^{-1} q^{-1} \right)}  \nonumber \\
&&   - \; (x^{\frac{1}{2}} q t^{-1} - x^{-\frac{1}{2}} q^{-1} t)(x^{\frac{1}{2}} q \bar{t} - x^{-\frac{1}{2}} q^{-1} \bar{t}^{-1}) \prod_{y\in\bar{\mathcal{U}}} (x-y)(x  - y^{-1})  \nonumber \\
&& \qquad \qquad \qquad \qquad \qquad \qquad \qquad \qquad \qquad \times \prod_{\tilde{x} \in \gen{X}^{1,n}_{x}} \frac{\left( x q - \tilde{x} q^{-1} \right)}{\left( x - \tilde{x} \right)}\frac{\left( x  q^2 -  \tilde{x}^{-1} q^{-2} \right)}{\left( x  q - \tilde{x}^{-1} q^{-1} \right)} \; . \nonumber \\ 
\>

From (\ref{phi}) we can readily see that equation $\gen{\Phi}_{2L + 3n} ~\bar{\mathcal{T}}_n (\gen{X}^{1,n}) = 0$ is a linear PDE of order 
$2L$ in $n$ variables. Its structure is quite similar to (\ref{omega}) and the PDEs found in \cite{Galleas_2015, Galleas_Lamers_2014} describing
integrable vertex models. Although the order of (\ref{phi}) depends on the lattice length $L$, the corresponding differential equation can also
be recasted as a first-order vector-valued PDE along the lines of \cite{Galleas_2015}. Moreover, the fact that Bethe roots only enter (\ref{phi}) 
through the parameter $\mathfrak{c}$ will play an important role for its resolution. \Secref{sec:sol_open} will then be devoted to the analysis
of the equation $\gen{\Phi}_{2L + 3n} ~\bar{\mathcal{T}}_n (\gen{X}^{1,n}) = 0$ for small values of the magnon number $n$.

\subsubsection{Solutions and their properties}
\label{sec:sol_open}

Similarly to the case with boundary twists discussed in \Secref{sec:twist}, a determinant representation for $\mathcal{T}_n$ also exists
\cite{Wang_2002, Kitanine_2007}. However, it is again unclear from those results that  $\mathcal{T}_n$ might depend on Bethe roots
only through a particular combination. In this section we shall focus on the resolution of the single equation $\gen{\Phi}_{2L + 3n} ~\bar{\mathcal{T}}_n (\gen{X}^{1,n}) = 0$,
although Corollary \ref{hierII} entitles us with $2L+3n+1$ PDEs. A priori, it is not clear that a single equation would be able to
fix the desired solution but direct inspection of (\ref{system_open}) for small values of $L$ and $n$ suggests that each equation
separately is able to determine the desired on-shell polynomial solution up to an overall multiplicative factor. If this is indeed the
case then our results imply that the on-shell scalar products only depend on the Bethe roots through this particular combination
$\mathfrak{c}$. Moreover, we also find that solutions $\bar{\mathcal{T}}_n \in \mathbb{C}_{2L} [\gen{X}^{1,n}]$ 
only exists for parameters $x_i^{\mathcal{B}}$ constrained by the system of equations (\ref{BA_open_pol}). In what follows we shall demonstrate this fact explicitly for the cases $n=1,2$. 

\paragraph{Case $n=1$.} In this case our problem consists of solving the following ordinary differential equation,
\<
\label{tt1}
\frac{q^{3}}{(2L)!} \frac{\left(x_1 - x_1^{-1} \right)}{\left( x_1 q - x_1^{-1} q^{-1} \right)}  \mathcal{Z}_{x_1}  \frac{\dd^{2L} \bar{\mathcal{T}}_1}{\dd x_{1}^{2L}} + \left( t \bar{t}^{-1} q^{2L+1} - t^{-1} \bar{t} q^{3} \right) \left( \mathfrak{c} - x_1 q - x_1^{-1} q^{-1}  \right) \bar{\mathcal{T}}_1  = 0 \; . \nonumber \\
\> 
For solutions $\bar{\mathcal{T}}_1 \in \mathbb{C}_{2L} [x_1]$ we have that $\frac{\dd^{2L} \bar{\mathcal{T}}_1}{\dd x_{1}^{2L}}$
is a constant and we are left with
\[ 
\label{ttt1}
\bar{\mathcal{T}}_1 \propto \frac{\left(x_1 - x_1^{-1} \right)}{\left( x_1 q - x_1^{-1} q^{-1} \right)} \frac{\mathcal{Z}_{x_1}}{\left( x_1 q + x_1^{-1} q^{-1} - \mathfrak{c} \right)} \; .
\]
Formula (\ref{ZX}) for $n=1$ shows that $\mathcal{Z}_{x_1}$ contains no poles for finite values of the variable $x_1$.
Hence, the only source of poles in (\ref{ttt1}) is the denominator of (\ref{tt1}). More precisely, these poles are located at
\begin{enumerate}[label=\emph{\roman*}.] 
\item $(x_1 q)^2 = 1 \; ,$ 
\item $(x_1 q)^2 - x_1 q ~\mathfrak{c} + 1 = 0 \; . $
\end{enumerate}
The condition $\bar{\mathcal{T}}_1 \in \mathbb{C}_{2L} [x_1]$ now asks for the residues of (\ref{ttt1}) at those poles to vanish. 
As far as the poles (\textit{i}) are concerned, we can readily see that the corresponding residues vanish identically. However, the analysis
for the poles at (\textit{ii}) is slightly more involved. For that we first recall the definition of $\mathfrak{c}$ and notice that
(\textit{ii}) can be rewritten as $(x_1 - x_1^\mathcal{B}) (x_1 q^2 - 1/x_1^\mathcal{B}) = 0$. The residue of (\ref{ttt1}) at 
$x_1=x_1^\mathcal{B}$ vanishes provided $\mathcal{Z}_{x_1^\mathcal{B}} = 0$. This condition is equivalent to the Bethe ansatz equation \eqref{BA_open_pol} for $n=1$. 
Likewise, the pole at $x_1 = 1/(q^2 x_1^\mathcal{B})$ has zero residue when $x_1^\mathcal{B}$ is on shell. The latter follows from the fact
that the transformation $x_1^\mathcal{B} \mapsto 1/(x_1^\mathcal{B} q^2)$ only modifies \eqref{BA_open_pol} by an overall factor.
Hence, our analysis so far suggests that polynomial solutions and Bethe ansatz equations are equivalent requirements for the 
equation $\gen{\Phi}_{2L + 3n} ~\bar{\mathcal{T}}_n (\gen{X}^{1,n}) = 0$ also for the case of open boundaries.

\paragraph{Case $n=2$.} We proceed to the case $n=2$ along the same lines described in \Secref{sec:sol}. For that we assume $\bar{\mathcal{T}}_2 \in \mathbb{C}_{2 L}[x_1 , x_2]$
and from (\ref{phi}) one then finds
\[
\label{TH}
\bar{\mathcal{T}}_2 (x_1 , x_2) = \frac{\tilde{\Theta}^{-1}}{(x_1 - x_2)(x_1 x_2 q - q^{-1})} \frac{\left[ \tilde{K}(x_1, x_2) \tilde{H}(x_1) - \tilde{K}(x_2 , x_1) \tilde{H}(x_2) \right]}{\left[ (x_1 + x_2)(x_1 x_2 q + q^{-1}) - x_1 x_2 \; \mathfrak{c} \right]} \; ,
\]
where $\tilde{\Theta} \coloneqq (2L)! \; (t \bar{t}^{-1} q^{2L-4} - t^{-1} \bar{t} q^2)$. The function $\tilde{H}$ in (\ref{TH}) is a polynomial, more precisely 
$\tilde{H}(x) \in \mathbb{C}_{2L}[x]$, whilst $\tilde{K}$ is defined as
\<
&& \tilde{K}(x_1 , x_2) \coloneqq \frac{ x_1  \left(x_2 - x_2^{-1} \right)}{\left(x_2 q - x_2^{-1} q^{-1}\right)} \nonumber \\
&& \times \left[ (x_2 t - t^{-1})(x_2 \bar{t}^{-1} - \bar{t}) (x_1 q - x_2 q^{-1}) (x_1 x_2 - 1)  \prod_{y \in \bar{\mathcal{U}}} (x_2 q - y q^{-1})(x_2 q - y^{-1} q^{-1}) \right. \nonumber \\
&&  \qquad \qquad \qquad \quad  + \left. (x_2 q t^{-1} - q^{-1} t)(x_2 q \bar{t} - q^{-1} \bar{t}^{-1}) (x_2 q - x_1 q^{-1})(x_1 x_2 q^2 - q^{-2}) \right. \nonumber \\
&& \qquad \qquad \qquad \qquad \qquad \qquad \qquad \qquad \qquad \qquad \qquad \qquad  \times \left. \prod_{y \in \bar{\mathcal{U}}} (x_2 - y) (x_2 - y^{-1}) \right] \; . \nonumber \\
\>
In accordance with Lemma~\ref{symmetric_open}, formula (\ref{TH}) is manifestly symmetric under the permutation of variables $x_1$ and $x_2$.
This property allows us to focus on the variable $x_1$ in order to analyze the poles structure of (\ref{TH}). 
The function $\tilde{H}$, although not yet specified, does not contribute to this poles structure as it is a polynomial. Thus the RHS 
of \eqref{TH} only exhibits simple poles at
\begin{enumerate}[label=\emph{\roman*}.] 
\item $x_1=x_2 \; , $
\item $x_1=-1/(x_2 q^2) \; ,$
\item $(x_1 q)^2 = 1 \; ,$ 
\item $(x_1 + x_2)(x_1 x_2 q + q^{-1}) - x_1 x_2 \; \mathfrak{c} = 0 \; .$
\end{enumerate}
The residue at each of the poles (\textit{i})--(\textit{iii}) vanishes, so these singularities are removable.
By way of contrast, the poles at (\textit{iv}) require a more detailed analysis. 
In order to carry out this analysis we introduce parameterizations $u \coloneqq x_1 + x_2$ and $v \coloneqq x_1 x_2$. By doing so we avoid the appearance 
of square roots when looking at the algebraic curve defined by (\textit{iv}). The requirement that the residue of (\ref{TH}) vanish at (\textit{iv})
then yields the constraint
\[
\label{KtH}
\tilde{K}\big(x_1(u,v), x_2(u,v)\big) \tilde{H}\big(x_1(u,v)\big) - \tilde{K}\big(x_2(u,v), x_1(u,v)\big) \tilde{H}\big(x_2(u,v)\big) = 0
\]
for $u = v \mathfrak{c} (v q + q^{-1})^{-1}$. Next we write $\tilde{H}(x) = \sum_{i=0}^{2L} \tilde{c}_i x^i$ and let (\ref{KtH}) fix the coefficients $\tilde{c}_i$.
At this point it is important to emphasize that the use of variables $u$ and $v$ plays a fundamental role. They overcome the problem of dealing with square roots in the analysis
of (\ref{KtH}) allowing one to read out a linear system of equations for the coefficients $\tilde{c}_i$. Similar to the case with twisted boundaries discussed in 
\Secref{sec:PDE}, here our linear system also has more equations than coefficients $\tilde{c}_i$ to be fixed. The extra equations imposes a constraint on the parameter
$\mathfrak{c}$ that will be explicitly written down for $L=2,3$ in what follows.

\begin{example} 
For $L=2$ we find the following condition on the parameter $\mathfrak{c}$,
\[ \label{michael}
\begin{aligned}
& \left( \mathfrak{c} + \frac{(q^2 - q^{-2}) (1 + t^2 \bar{t}^2) y_1 y_2 + (q^2 \bar{t}^2 - q^{-2} t^2)	(y_1 + y_2) (1 + y_1 y_2)}{(q^{-1} t^2 - q \bar{t}^2) y_1 y_2} \right) \\
& \qquad\qquad\qquad\qquad\qquad\qquad\qquad\qquad \times \prod_{j=1}^2 \left(\mathfrak{c} - (q+q^{-1}) (y_1 + y_1^{-1})\right) = 0 \; .
\end{aligned}
\]
In order to avoid long formulae, for $L=3$ we restrict ourselves to the homogeneous case $y_i = 1$. In that case we find that $\mathfrak{c}$ has to satisfy
\[
\label{klaus}
\begin{aligned}
& \left( (t^2 - \bar{t}^2)^3 \; \mathfrak{c}^3 - (t^2 - \bar{t}^2)^2 \tilde{W}_2 \; \mathfrak{c}^2 + (t^2 - \bar{t}^2) \tilde{W}_1 \; \mathfrak{c} - \tilde{W}_0 \right) \; \prod_{j=1}^3 \left(\mathfrak{c} - 2(q+q^{-1})\right) = 0 \; ,
\end{aligned}
\]
with coefficients $\tilde{W}_0$, $\tilde{W}_1$ and $\tilde{W}_2$ explicitly reading
\[
\begin{aligned}
\tilde{W}_2 \coloneqq \ & (q - q^{-1}) (q^{-2} + 4 + q^2) (1 + t^2 \bar{t}^2) + 12 (q^{-1} t^2 - q \bar{t}^2) \; , \\
\tilde{W}_1 \coloneqq \ & (q - q^{-1})^2 (2 + q^2) (2 + q^{-2}) (1 + \bar{t}^4 t^4) \\
& + 6 (q - q^{-1}) 
\big[ (2 + q^2) (2 + q^{-2})(q^{-1} t^2-q \bar{t}^2) - (q^{-3} t^2 + q^3 \bar{t}^2) \big]
(1 + t^2 \bar{t}^2)  \; , \\
& - (q^3 - 51 q^{-1} + 2 q^{-3}) q^{-1} t^4 \; - \; (2 q^3 - 51 q^{-1} + q^{-3}) q \bar{t}^4 \\
& + (q^6 + 4 q^4 - 13 q^2 - 80 - 13 q^{-2} + 4 q^{-4} + q^{-6}) t^2 \bar{t}^2 \\ 
\tilde{W}_0 \coloneqq \ & (q - q^{-1}) (q^2 - q^{-2})^2 (1 + t^6 \bar{t}^6) \\
& + 6 (q - q^{-1}) (q^2 - q^{-2}) \big[(2 + q^{-2}) t^2 - (2 + q^2) \bar{t}^2\big] (1 + t^4 \bar{t}^4) \\
& - (q^2 - q^{-2}) (q^3 - 37 q - 13 q^{-1} + q^{-3}) q^{-2} t^4 (1 + t^2 \bar{t}^2) \\
& - (q^2 - q^{-2}) (q^3 - 13 q - 37 q^{-1} + q^{-3}) q^2 \bar{t}^4 (1 + t^2 \bar{t}^2) \\
& + (q^2 - q^{-2}) (q + q^{-1}) (q^4 + q^2 - 52 + q^{-2} + q^{-4}) t^2 \bar{t}^2 (1 + t^2 \bar{t}^2) \\
& + 2 (3 + q^{-2}) (q^2 - 12 + 3 q^{-2}) q^3 \bar{t}^6 \; + \ 2 (3 + q^2) (3 q^2 - 12 + q^{-2}) q^{-3} t^6 \\
& + 2 (16 q^5 - 36 q^3 - 57 q - 31 q^{-1} + 9 q^{-3} + 3 q^{-5}) t^4 \bar{t}^2 \\
& - 2 (3 q^5 + 9 q^3 - 31 q - 57 q^{-1} - 36 q^{-3} + 16 q^{-5}) t^2 \bar{t}^4  \; .
\end{aligned}
\]
\end{example}

\begin{remark}
The structure of the polynomial conditions (\ref{michael}) and (\ref{klaus}) is essentially the same as the one found for the parameter
$\mathfrak{b}$ in \Secref{sec:sol}. Thus one can see that the counting of admissible solutions $\mathfrak{c}$ is also in agreement with the number
of eigenvectors for the six-vertex model with open boundary conditions.
\end{remark}

\section{Concluding remarks}
\label{sec:CONCLUSION}

The evaluation of form factors for quantum integrable systems was shown in \cite{Slavnov_1989} to be directly
related to the computation of vertex models scalar products, and in this work we have presented a description of such
quantities in terms of linear PDEs. It is worth remarking that the description of certain correlation functions in terms of linear PDEs, 
namely KZ equations, is among the achievements of conformal symmetry \cite{Knizhnik_1984}. In this sense we can see
a convergence of those approaches at least at an ideological level. Although it is not clear if there exists a precise relation it would  
be interesting to further investigate this possibility.

In the present paper we have selected two types of vertex models to illustrate our method: the six-vertex model with generalized toroidal
boundary conditions, also known as twisted boundaries; and the six-vertex model with open boundary conditions. The twisted boundaries case
has been discussed in \Secref{sec:twist} while \Secref{sec:open} is concerned with the open boundaries case.
It is also important to stress here that our differential approach is one of the outcomes of the algebraic-functional method for scalar products proposed
in \cite{Galleas_SCP} and \cite{Galleas_openSCP}. Within this framework, the Yang-Baxter and reflection algebras plays a fundamental
role, and the PDEs obtained in \Secref{sec:PDE} and \Secref{sec:PDE_open} are a direct consequence of those algebraic structures.
This feature is also in consonance with the derivation of KZ equations by means of vertex algebras.
However, our approach not only renders a single PDE describing on-shell scalar products, but it produces a whole hierarchy of PDEs
for each one of the scalar products considered here.

Our PDEs also unveil interesting properties of on-shell scalar products for six-vertex models that were not apparent from the results previously
obtained in the literature \cite{Slavnov_1989, Wang_2002, Kitanine_2007}. From (\ref{omk}) and (\ref{phi}) we can see that the dependence of
$\mathcal{S}_n$ and $\mathcal{T}_n$ with the corresponding Bethe roots might only appear through particular linear combinations denoted respectively by
the parameters $\mathfrak{b}$ and $\mathfrak{c}$ in the main text. In \Secref{sec:PDE} we have shown that $\mathfrak{b}$ is determined by a single polynomial equation. 
This appears to represent a significant simplification since the standard approaches require the resolution of a system of Bethe ansatz equations (\ref{BA_pol}). 
In this way, our approach somehow seems to be able to combine the whole set of Bethe ansatz equations into a single equation for the 
quantity $\mathfrak{b}$, which is the one relevant for the on-shell scalar product $\mathcal{S}_n$. The same applies to the parameter $\mathfrak{c}$ 
in the case of $\mathcal{T}_n$ as shown in \Secref{sec:PDE_open}.

Nevertheless, there are still many questions that have eluded us so far. Foremost it would be important to prove rigorously that
the leading term differential operators $\Omega_{L+n-2}$ and $\Phi_{2L+3n}$ are indeed able to fix the desired scalar products.   
The particular combinations of Bethe roots forming the parameters $\mathfrak{b}$ and $\mathfrak{c}$ are quite appealing and one might wonder
if there is a more concrete meaning associated to them. Those parameters consist of simple sums of the associated Bethe roots, which suggests
they might correspond to the eigenvalue of one of the operators contained in the commuting family generated by the vertex model's transfer matrix.
Moreover, it is unclear to us at the moment how the system of Bethe ansatz equations could be suitably combined yielding the single equation
for the relevant parameter $\mathfrak{b}$ or $\mathfrak{c}$. Those questions certainly deserve further investigation.

\section{Acknowledgements}
\label{sec:ACK}

The authors thank G. Arutyunov for discussions and comments on this manuscript.
The work of W.G.\ is supported by the German Science Foundation (\textsc{dfg}) under the Collaborative Research Center 
(\textsc{sfb}) 676, \textit{Particles, Strings and the Early Universe}. The work of J.L.\ is supported by the Netherlands Organization 
for Scientific Research (\textsc{nwo}) under the \textsc{vici} grant 680-47-602 and by the \textsc{erc} Advanced Grant no.~246974, 
\textit{Supersymmetry: a window to non-perturbative physics}. J.L.\ also acknowledges the \textsc{d-itp} consortium, a program 
of \textsc{nwo} funded by the Dutch Ministry of Education, Culture and Science (\textsc{ocw}), and thanks \textsc{desy} for the hospitality
during the course of this work.

%
%
%

\bibliographystyle{hunsrt}
\bibliography{references}

\end{document}